\newcommand{\source}{{THIS IS A PREPRINT VERSION. IF YOU FOUND THIS READING USEFUL FOR YOUR RESEARCH PLEASE CITE THE PUBLISHED VERSION DOI: \href{https://doi.org/10.1109/TITS.2020.2975815}{https://doi.org/10.1109/TITS.2020.2975815}}}
\newcounter{Theorem}
\newtheorem{theorem}{Theorem}[Theorem]
\newcounter{Lemma}
\newtheorem{lemma}{Lemma}[Lemma]
\newcounter{Definition}
\newtheorem{definition}{Definition}[Definition]
\newcounter{Assumption}
\newtheorem{assumption}{Assumption}[Assumption]
\newcounter{Problem}
\newtheorem{problem}{Problem}[Problem]
\newcounter{Proposition}
\newtheorem{proposition}{Proposition}[Proposition]
\newcounter{Remark}
\newtheorem{remark}{Remark}[Remark]
\DeclareMathOperator{\dom}{dom}
\DeclareMathOperator{\R}{\mathbb{R}}
\DeclareMathOperator{\N}{\mathbb{N}}
\DeclareMathOperator{\L2t}{\mathcal{L}_{2t}}
\DeclareMathOperator{\tL2}{\textit{t}\text{-}\mathcal{L}_2}
\begin{document}

\bstctlcite{IEEEexample:BSTcontrol}

\renewcommand{\baselinestretch}{0.98}
\tikzstyle{block} = [draw, rectangle, 
    minimum height=2.5em, minimum width=2.5em]
\tikzstyle{sum} = [draw, circle, minimum width=5pt, draw, inner sep=0pt, node distance=1cm]
\tikzstyle{joint} = [draw, circle, minimum width=2pt, fill, inner sep=0pt, node distance=1cm]
\tikzstyle{point} = [draw, circle, minimum width=0pt, inner sep=0pt, node distance=1cm]
\tikzstyle{input} = [coordinate]
\tikzstyle{output} = [coordinate]
\makeatletter
\def\ps@IEEEtitlepagestyle{}
\title{A Hybrid Controller for DOS-Resilient String-Stable Vehicle Platoons}
%
%
% author names and IEEE memberships
% note positions of commas and nonbreaking spaces ( ~ ) LaTeX will not break
% a structure at a ~ so this keeps an author's name from being broken across
% two lines.
% use \thanks{} to gain access to the first footnote area
% a separate \thanks must be used for each paragraph as LaTeX2e's \thanks
% was not built to handle multiple paragraphs
%

\author{Roberto~Merco$^{\dagger}$,~\IEEEmembership{Student Member,~IEEE,}
        Francesco~Ferrante$^{\ddagger}$,~\IEEEmembership{Member,~IEEE,}
        and~Pierluigi~Pisu$^{\dagger}$,~\IEEEmembership{Member,~IEEE}% <-this % stops a space
\thanks{$^{\dagger}$Automotive Department of Clemson University, Greenville SC 29607 USA. Email: rmerco@g.clemson.edu, pisup@clemson.edu.}%
\thanks{$^{\ddagger}$Univ. Grenoble Alpes, CNRS, GIPSA-lab, F- 38000 Grenoble, France. Email: francesco.ferrante@gipsa-lab.fr}
\thanks{This material is based upon work supported by the National Science
Foundation (NSF) under grant No. CNS-1544910. Research by Francesco Ferrante is funded in part by ANR via project HANDY, number ANR-18-CE40-0010. Any opinions, findings and conclusions or recommendations expressed in this material are those of the authors and do not necessarily reflect the views of the National Science Foundation.}
}

\maketitle

% As a general rule, do not put math, special symbols or citations
% in the abstract or keywords.
\begin{abstract}
This paper deals with the design of resilient Cooperative Adaptive Cruise Control (CACC) for homogeneous vehicle platoons in which communication is vulnerable to Denial-of-Service (DOS) attacks. We consider DOS attacks as consecutive packet dropouts. We present a controller tuning procedure based on linear matrix inequalities (LMI) that maximizes the resiliency to DOS attacks, while guaranteeing performance and string stability. The design procedure returns controller gains and gives a lower bound on the maximum allowable number of successive packet dropouts. A numerical example is employed to illustrate the effectiveness of the proposed approach. 
\end{abstract}

% Note that keywords are not normally used for peerreview papers.
\begin{IEEEkeywords}
Vehicle platooning, hybrid systems, denial of service attack.
\end{IEEEkeywords}

\IEEEpeerreviewmaketitle

\section{Introduction}
\subsection{Background}
While the demand for mobility is growing over the years, Intelligent Transportation Systems (ITS) is a promising advanced solution capable of improving the efficiency and safety of the whole mobility infrastructure via Connected and Autonomous Vehicles (CAVs) \cite{lu2014connected}.

Probably the most famous application involving connectivity between vehicles is the Cooperative Adaptive Cruise Control (CACC). 
Common metrics for CACC are \textit{individual vehicle stability} and \textit{string stability}. The former refers to the reduced distance between vehicles and leads to higher traffic throughput and higher fuel economy \cite{van2006impact}. The latter refers to the attenuation of disturbances and shock waves throughout the string of cars and enables to improve traffic flow by avoiding the so-called phantom traffic jam \cite{van2006impact,ploeg2014lp}. 
By employing inter-vehicle communication (IVC) and on-board sensors, e.g., radar and lidar, CACC improves features of the Adaptive Cruise Control (ACC) by reducing the distance between vehicles and by providing enhanced string stability properties. Indeed, CACC can guarantee string stability with inter-vehicle time gaps smaller than one second, whereas ACC cannot \cite{ploeg2014lp,naus2010string}. Notice that a small inter-vehicle time gap leads to higher traffic throughput and higher fuel economy \cite{van2006impact}.

Besides the benefits introduced by IVC, this wireless network exposes vehicle control systems to network induced imperfections, i.e., packet dropping, network delays \cite{heemels2010networked}, and network vulnerabilities, as e.g., cyber attacks \cite{dadras2015vehicular}. Such an unreliable and compromised network can affect vehicle platoons by leading to loss of performance and safety-related issues, e.g., collisions, which could lead to loss of human lives \cite{oncu2014cooperative,acciani2018cooperative,dadras2015vehicular}. 

This paper deals with the design of CACC resilient to DOS attacks. DOS-resilient design approaches seek to maximize the number of consecutive packet dropouts that can be tolerated to maintain stability and performance of the vehicle platooning. Indeed, differently from natural packet dropouts that generally generate random pattern of packet dropouts \cite{mir2014lte,rayamajhi2018impact}, DOS attacks maliciously generate a prolonged period without communication with the purpose of disrupting the control system \cite{amin2009safe,dolk2017eventDoS,yuan2013resilient}. As such, the use of probabilistic approaches is ineffective in this scenario.

\subsection{Related Work} \label{sec:literature}
Several results can be found in the literature on the modeling and design of vehicle platoons with CACC controllers, as it emerges from \cite{naus2010string,oncu2014cooperative,li2015overview,ploeg2011design,ploeg2014controller} and references therein. 
Ploeg et al. \cite{ploeg2011design} introduce one of the most relevant designs for CACC. However, such a design is based on continuous-time control techniques that do not consider the discrete packet-based nature of the IVC. 
To this end, the research community analyzed and enhanced the CACC in \cite{ploeg2011design} by considering an unreliable packet-based network. In \cite{oncu2012string}, hybrid modeling is used to analyze stability of vehicle platoons controlled by CACC, where IVC is employed by adopting scheduling protocols. 
In \cite{gong2018sampled} a sampled-data design approach is proposed, whereas \cite{li2019string,dolk2017event} discuss event-triggered control strategies.
To cope with communication losses, \cite{ploeg2014graceful,wu2019cooperative} propose CACC algorithms that replace data exchanged between vehicles by employing estimation techniques based only on onboard sensors. To address the same issue, Harfouch et al. \cite{harfouch2017adaptive}, instead, propose a switching strategy between CACC and ACC. Notice that these approaches consist of some form of fallback to control strategies that avoid exchanging information between vehicles. Therefore, they require inter-vehicle time gaps larger than those achievable with traditional CACC to guarantee string stability. This affects fuel consumption and traffic throughput.

While the design of stable vehicle platoons with cars connected via an unreliable network has been widely investigated, the design of control systems for connected vehicles in the presence of cyber attacks is still an active research area despite being significantly critical. Indeed, compromised networks lead to severe consequences with the involvement of human lives and safety; see, i.e., Dadras et al. \cite{dadras2015vehicular}. DOS jamming attack in connected vehicles is investigated in \cite{alipour2017string}, where string stability is analyzed under packet dropping generated by jamming actions. Biron et al. \cite{biron2018real} propose and approach to detect and estimate the entity of DOS attacks in a vehicle platoon by modeling it as an unknown constant delay and propose a control architecture able to mitigate the effect of the attack. Dolk et al. \cite{dolk2017eventDoS} employ the CACC as a case study for evaluating a design procedure for DOS-resilient event-triggered mechanisms.

\subsection{Contributions}

The research community has been improving CACC in \cite{ploeg2011design} by introducing event-driven communication or fallback strategies to obtain safer platoons in case of an unreliable network. However, to the best of the authors knowledge, none of the existing works propose a tuning approach for the CACC in \cite{ploeg2011design} to increase the resilience to consecutive communication losses. Such a tuning could increase the resilience of the CACC to DOS attacks and unreliable networks. Indeed, it could be used with event-driven or fallback strategies to enhance the resilience of vehicle platoons even further. In particular, it would allow having a longer attack tolerant time interval that leads to postponing or improving fallback to safer strategies.
Furthermore, it emerges from the literature review that few control strategies deal with vehicle performance in regulating the distance gap between cars while improving resiliency.

In this paper, we focus on addressing these research gaps by proposing a design approach for CACC that aims at maximizing the resiliency to DOS attacks while guaranteeing required performance and string stability. 
In particular, the main contributions of this paper are listed as follows: 
\begin{itemize}
\item We propose a decentralized hybrid controller that modifies the continuous-time proportional-derivative (PD) regulator in  \cite{ploeg2011design} by adding a Zero Order Hold (ZOH) device that allows taking into account the packet-based nature of the IVC. 
\item We devise a numerically efficient offline tuning algorithm based on linear matrix inequalities (LMI). Such an algorithm allows finding optimal gains for the controller capable of maximizing the resiliency to DOS while guaranteeing performance requirements and string stability of the vehicle platooning.
\item Given the controller gains, the proposed algorithm estimates the value of the \textit{maximum allowable number of successive packet dropouts} (MANSD) that identifies the worst DOS attack that the control system can overcome without compromising string stability.
\end{itemize}

This paper extends our prior works in \cite{MercoACC,MercoIV19}. Here we extensively present the problem and the methodology by providing details, proofs, and simulations that are not present in \cite{MercoACC,MercoIV19} due to space limitation.

The remainder of this paper is organized as follows. Section II introduces the model of vehicle platoons. Section III formulates the control problem. Next, Section IV presents the stability analysis and the proposed design algorithm. Simulation results are shown in Section V, and Section VI concludes the paper.

\subsection{\textbf{Notation}}
The set $\mathbb{N}$ is the set of the strictly positive integers, $\mathbb{N}_0=\mathbb{N}\cup\{0\}$, $\R$ is the set of real numbers, $\mathbb{R}_{\geq 0}$, $\R_{>0}$, $\R_{<0}$ are, respectively, the sets of nonnegative, positive, and negative real numbers, $\mathbb{C}$ is the set of complex numbers. Given $z\in\mathbb{C}$, $\Re(z)$ and $\Im(z)$ denote, respectively, the real and the imaginary part of $z$. For any given real polynomial $\rho$, $\Lambda_{\max}(\rho)=\displaystyle\max_{s\colon \rho(s)=0}\Re(s)$ and $\Im(\rho)=\displaystyle\{\omega\in\mathbb{R}\colon\exists \alpha\in\mathbb{R}\,\,\text{s.t.}\,\, \rho(\alpha+j\omega)=0\}$ (the set of the imaginary parts of the roots of $\rho$). With a slight abuse of notation, given a real polynomial $\rho(s)=s^2+as+b$, with $a, b\in\mathbb{R}_{>0}$, we denote the damping ratio of the roots of $\rho$ as $\zeta(\rho)=\frac{a}{2\sqrt{b}}$. The symbol $\mathbb{R}^n$ represents the Euclidean space of dimension $n$, $\mathbb{R}^{n\times m}$ is the set of $n \times m$ real matrices. Given $A\in \mathbb{R}^{n \times m}$, $A^\top$ denotes the transpose of $A$, and when $n=m$, $A^{-\top}=(A^\top)^{-1}$ (when $A$ is nonsingular), $\text{He}(A)=A+A^\top$, $\Lambda(A)$ denotes the spectrum of $A$, $\Lambda_{\max}(A)\coloneqq\max\Re(\Lambda(A))$, and $\zeta_{\min}(A)\coloneqq\min\zeta(\Lambda(A))$.   
For a symmetric matrix $A$, $A>0$, $A \geq 0$, $A<0$, and $A \leq 0$ means that $A$ is, respectively, positive definite, positive semidefinite, negative definite, and negative semidefinite. The symbol $S_+^n$ represents the set of $n \times n$ symmetric positive definite matrices, $\lambda_{\min}(A)$ and $\lambda_{\max}(A)$ denote respectively the smallest and the largest eigenvalue of the symmetric matrix $A$. In partitioned symmetric matrices, the symbol $\bullet$ represent a symmetric block. For a vector $x \in \mathbb{R}^n$, $\vert x \vert$ denotes the Euclidean norm. Given two vectors $x$ and $y$, we denote $(x,y)=[x^\top,y^\top]^\top$. Given a vector $x \in \mathbb{R}^n$ and a closed set $\mathcal{A}$, the distance of $x$ to $\mathcal{A}$ is defined as $\vert x \vert_\mathcal{A}=\textrm{inf}_{y\in\mathcal{A}} \vert x-y \vert$. For any function $z: \mathbb{R} \rightarrow \mathbb{R}^n$, we denote $z(t^+):=\textrm{lim}_{s\rightarrow t^+} z(s)$, when it exists. A function $\alpha\colon\mathbb{R}_{\geq 0} \rightarrow \mathbb{R}_{\geq 0}$ is said to be of class $\mathcal{K}$ if it is continuous, strictly increasing, and $\alpha(0)=0$.

\section{Modeling}
\subsection{Platooning Dynamics}
We consider a homogeneous vehicle platooning formed by $m$ identical cars. The vehicle that leads the platoon is denoted by $\mathcal{V}_0$, whereas $\mathcal{V}_i$, $i\in P_m\coloneqq\{1,2, \dots, m\}$, identifies the following vehicles. As a target of the vehicle platooning, $\mathcal{V}_i$ must maintain the reference distance $d_{r_i}$ from its preceding vehicle $\mathcal{V}_{i-1}$ by employing a constant time gap policy reference. In particular, $d_{r_i}= r_i + h \, v_i, \, i \in P_m$, where $r_i$ and $v_i$ are, respectively, the standstill reference distance and the speed of $\mathcal{V}_i$, and $h\in\mathbb{R}_{> 0}$ is the constant time gap between vehicles. The spacing error $e_i$ is given by
\begin{equation} 
\label{eq:spacingError}
e_i\coloneqq(q_{i-1}-q_i - L) - (r_i + h \, v_i) = d_{i} - d_{r_i} 
\end{equation}
where $q_i$, $L$, and $d_i\coloneqq q_{i-1}-q_i - L$ denote, respectively, the position, the length of vehicles in the platoon, and the distance between vehicles $\mathcal{V}_i$ and $\mathcal{V}_{i-1}$.

Each vehicle in the platooning can be modeled as a continuous-time linear time-invariant dynamical system; see \cite{stankovic2000decentralized} for further details. 
In particular, we consider:
\begin{equation} \label{eq:leader}
\mathcal{V}_0:
\begin{pmatrix} 
\dot{v}_0 \\
\dot{a}_0 \\
\end{pmatrix}
= 
\begin{pmatrix} 
a_0 \\
-\frac{1}{\tau_d} a_0+\frac{1}{\tau_d} u_0 \\
\end{pmatrix}
\end{equation}
\begin{equation} \label{eq:vehicleDynamics}
\mathcal{V}_i:
\begin{pmatrix} 
\dot{e}_i \\
\dot{v}_i \\
\dot{a}_i \\
\end{pmatrix}
= 
\begin{pmatrix} 
v_{i-1}-v_i-h a_i \\
a_i \\
-\frac{1}{\tau_d} a_i+\frac{1}{\tau_d} u_i \\
\end{pmatrix}, \quad i \in P_m
\end{equation}
where $v_i$ ($v_0$), $a_i$ ($a_0$), and $u_i$ ($u_0$) are, respectively, the speed, the acceleration, and the control input of $\mathcal{V}_i$ ($\mathcal{V}_0$), and $\tau_d$ represents the time constant of the powertrain dynamics of the vehicles in the platooning. Since we consider a homogeneous vehicle platooning, $h$ and $\tau_d$ are identically for each vehicle.

In this paper, we modify the decentralized CACC controller in \cite{ploeg2011design} by considering the impulsive behavior of the network induced by intermittent communication. Such a controller is given by the following dynamics
\begin{equation} \label{eq:TCcontroller}
\dot{u}_i = \mathcal{K}_{h}(u_i,\chi_i), \, \chi_i=\mathcal{K}_{PD}(e_i) + u_{i-1}
\end{equation}
where $u_{i-1}$ is the control signal of $\mathcal{V}_{i-1}$, $\mathcal{K}_{h}(u_i,\chi_i) \coloneqq -\frac{1}{h} u_i + \frac{1}{h} \chi_i$, and $\mathcal{K}_{PD}(e_i)\coloneqq k_p e_i + k_d \dot{e}_i$, and $k_p$ and $k_d$ are the controller gains. Observe that each vehicle is controlled by a dynamic controller as in \eqref{eq:TCcontroller}, where gains $k_p$ and $k_d$ are independent from the vehicle's index. Controller \eqref{eq:TCcontroller} is designed to rely on on-board sensors (e.g., radars, lidars and accelerometers) for measurements of $e_i$ and $\dot{e}_i$, and IVC for the signal $u_{i-1}$ of the preceding vehicle $\mathcal{V}_{i-1}$. However, in \cite{ploeg2011design}, the remote signal $u_{i-1}$ is treated as a continuous-time signal even though it is shared through a network. In this paper, we modify the structure of the controller \eqref{eq:TCcontroller} by proposing a hybrid controller able to deal with the discrete behavior of the packet-based network communication used to share the value of $u_{i-1}$. 
We realistically assume that measurements from on-board sensors are available continuously over time. A similar assumption is used in \cite{oncu2012string}.

\subsection{Communication Network and DOS Attacks}
We assume that the measurement $u_{i-1}$ is sampled and sent periodically at instants $t_{k_{i-1}}, \, k_{i-1}\in \N_0$ with constant transmission interval $T_s\in\R_{>0}$, i.e., $t_{k_{i-1}+1}-t_{k_{i-1}}=T_s$, $t_0=0$. In addition, we consider that the presence of IVC exposes the vehicle platooning to DOS attacks. DOS attacks aim at making the network unavailable, e.g., by injecting excessive traffic \cite{yuan2013resilient,amin2009safe} or by adopting jamming strategies \cite{poisel2011modern}. Hence, under DOS attacks, the IVC experiences packet dropouts and connected vehicles are unable to cooperate properly \cite{amoozadeh2015security}. Indeed, during DOS attacks, vehicles in the platooning do not receive any signal from the preceding vehicles, hence, $u_{i-1}$ is unavailable to the controllers.

\subsection{Adversarial Model of DOS attacks} \label{sec:adversarial}

In this paper, we consider that the objective of the attacker is to perform a DOS attack by using a  radio jamming strategy, which deliberately disrupts communications over a geographic area \cite{amoozadeh2015security}. The jamming strategy is unknown to the controller, but it is assumed that it is energy and geography constrained. Indeed, the attacker can perform DOS attacks that generate packet dropouts only for a finite period in time due to the limited amount of resources and because the platoon could move to an attack-free area. Notice that detection and mitigation techniques could be implemented in the vehicle network to reduce the duration of the DOS attacks \cite{amoozadeh2015security,hamieh2009detection}.
As such, packet losses due to DOS attacks can be assumed to be persistent only for a limited period of time \cite{amin2009safe}, and can be modeled by considering an upper bound to the maximum number of successive packet dropouts. 

Inspired by \cite{dolk2017eventDoS,feng2017resilient}, we model a DOS attack as a limited time period where an attacker succeeds in blocking the signal $u_{i-1}$ in such a way that it cannot reach the controller in vehicle $\mathcal{V}_{i}$. Therefore, several DOS attacks can be seen as a sequence of intervals $\{H_n\}_{n\in\mathbb{N}}$, each one of finite length, where the IVC network is interrupted. Specifically, we assume that the $n$-th DOS attack produces $\iota_n \in \{0,1, \ldots,\Delta\}, \forall n\in\mathbb{N}$, consecutive packet dropouts, where $\Delta \in \N_0$ is the \textit{maximum allowable number of successive packet dropouts} (MANSD), i.e., the maximum number of consecutive packet losses such that the vehicle platooning maintains his stability properties. Furthermore, we assume that at least one successful transmission is expected to occur in between intervals $\{H_n\}_{n\in\mathbb{N}}$. 

Notice that prolonged unavailability of network data can degrade or compromise string stability; see  \cite{oncu2012string,kester2014critical,oncu2014cooperative}. Therefore, tuning the CACC in such a way to maximize the resilience to DOS attacks is of paramount importance. In this paper, this is achieved by selecting the controller gains to maximize the value of MANSD. Moreover, notice that the estimation of MANSD is significant as the resiliency metric.

\section{Controller Outline and Problem Formulation}
\subsection{Proposed Networked Controller}
In this paper, we propose a modified version of the controller in \eqref{eq:TCcontroller} that takes into account the discrete nature of the data available through the network. The control scheme is depicted in Fig. \ref{fig:controlScheme}.

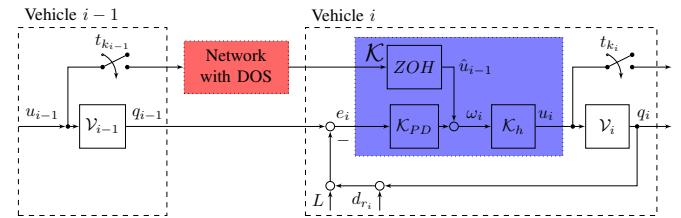
\begin{figure}[hpbt]
\resizebox{\hsize}{!}{
\begin{tikzpicture}[auto, node distance=2cm,>=latex']
 
 % Nodes central row
    \node [input, name=input] {};
    \node [joint,name=u_im1,right of =input,node distance=1cm] {};
    \node [block, right of=u_im1,node distance=0.7cm] (v_im1) {$\mathcal{V}_{i-1}$};
    \node [sum, right of=v_im1,node distance=4.6cm] (sume_i) {};
    \node [joint,name=ei_joint,right of =sume_i,node distance=0.7cm] {};
    \draw [dotted, fill=blue!50] (sume_i) ++(0.5cm,1.8cm) rectangle ++(4.2cm,-2.4cm) ; %
    \node [block, right of=ei_joint,node distance=1cm] (K_pd_i) {$\mathcal{K}_{PD}$};
    \node [sum, right of=K_pd_i, node distance=0.8cm] (sumx_i) {};
    \node [block, right of=sumx_i,node distance=1.2cm] (H_m1) {$\mathcal{K}_{h}$};
    \node [joint,name=u_i,right of =H_m1,node distance=1.2cm] {};
    \node [block, right of=u_i,node distance=0.7cm] (v_i) {$\mathcal{V}_i$};
    \node [joint,name=q_i,right of =v_i,node distance=0.6cm] {};
    \node [output, right of=v_i,node distance=1.3cm] (outputq_i) {};

	% Nodes bottom row
    \node [sum, below of=sume_i,node distance=1.2cm] (sumr) {};
    \node [input, below of=sumr,node distance=0.5cm](inputr) {};
    \node [sum, right of=sumr] (sumL) {};
    \node [input, below of=sumL,node distance=0.5cm](inputL) {};
	
	% Nodes upper row
	\node [block, above of=K_pd_i,node distance=1.2cm] (zoh) {$ZOH$};
	\node [left of=zoh,node distance=0.8cm, above] {{\Large $\mathcal{K}$}};
	\node [joint,name=virtualSwitch1,above of =v_i,node distance=1.2cm] {};
	\node [joint,name=virtualSwitch2,right of =virtualSwitch1, node distance=0.5cm] {};
	\node [joint,name=virtualSwitch3,above of =virtualSwitch2, node distance=0.25cm] {};
	\node [output, above of=outputq_i,node distance=1.2cm] (outputu_i) {};
	\node [joint,name=virtualSwitch1m1,above of =v_im1,node distance=1.2cm] {};
	\node [joint,name=virtualSwitch2m1,right of =virtualSwitch1m1, node distance=0.5cm] {};
	\node [joint,name=virtualSwitch3m1,above of =virtualSwitch2m1, node distance=0.25cm] {};
  	\node [block, dotted, fill=red!60, right of=virtualSwitch2m1,node distance=2.2cm] (net) {\begin{tabular}{c} Network \\ with DOS \end{tabular}};

	%for arc switch	
	\node [point,name=Arc1,above of =virtualSwitch1,node distance=0.25cm,label={[label distance=-0.3cm]10:$t_{k_i}$}] {};
	\node [point,name=Arc1m1,above of =virtualSwitch1m1,node distance=0.25cm,label={[label distance=-0.3cm]10:$t_{k_{i-1}}$}] {};
    
    % link central row
    \draw [->] (input) -- node {$u_{i-1}$} (u_im1);
    \draw [->] (u_im1) -- node {} (v_im1);
    \draw [->] (v_im1) -- node[pos=0.11] {$q_{i-1}$} (sume_i);
    \draw [->] (sume_i) -- node[pos=0.15] {$e_i$} (K_pd_i);
    \draw [->] (K_pd_i) -- node {} (sumx_i);
    \draw [->] (sumx_i) -- node {$\omega_i$} (H_m1);
    \draw [-] (H_m1) -- node[pos=0.3]{$u_i$} (u_i);
	\draw [->] (u_i) -- node {} (v_i);
	\draw [-] (v_i) -- node {} (q_i);
	\draw [->] (q_i) -- node[pos=0.15] {$q_i$}(outputq_i);
    
	% link bottom row
    \draw [->] (q_i) |- (sumL);
    \draw [draw,->] (inputL) -- node {$d_{r_i}$} (sumL);
    \draw [->] (sumL) -- node[pos=0.95] {} 
        node [near end] {} (sumr);
    \draw [draw,->] (inputr) -- node {$L$} (sumr);
    \draw [->] (sumr) -| node[pos=0.9,right] {$-$} 
        node [near end] {} (sume_i);
        
    % link upper row
    \draw [draw,-] (u_i) |- node {} (virtualSwitch1);
    \draw [draw,-] (virtualSwitch1) -- node {} (virtualSwitch3);
    \draw [->] (virtualSwitch2) -- (outputu_i);
    \draw [->] (Arc1.east) to [out=-30,in=90] +(0.25cm,-0.5cm);
    \draw [draw,-] (u_im1) |- node {} (virtualSwitch1m1);
    \draw [draw,-] (virtualSwitch1m1) -- node {} (virtualSwitch3m1);
  	\draw [->] (Arc1m1.east) to [out=-30,in=90] +(0.25cm,-0.5cm);
    \draw [->] (virtualSwitch2m1) -- (net);
    \draw [->] (net) -- node[pos=0.35,above]{}(zoh.west);
    \draw [->] (zoh) -| node[pos=0.45,right]{$\hat{u}_{i-1}$}(sumx_i);

    % dashed nodes
    \draw [dashed] (sume_i) ++(-0.5cm,2cm) rectangle ++(7.1cm,-3.8cm) node[pos=0] {Vehicle $i$};
    \draw [dashed] (input) ++(0cm,2cm) rectangle ++(3cm,-3.8cm) node[pos=0] {Vehicle $i-1$};

\end{tikzpicture}
}
\caption{Schematic of the control system in $\mathcal{V}_i$. In particular, signal connections between vehicles $\mathcal{V}_{i-1}$ and $\mathcal{V}_i$ along with the proposed hybrid controller $\mathcal{K}$ (in the purple area) are shown.}
\label{fig:controlScheme}
\end{figure}

For each $i$, the proposed hybrid controller handles discrete measurements of $u_{i-1}$, which are available through network packets only at time $t_{k_{i-1}}, \, k_{i-1} \in \N_0$. Specifically, the controller in \eqref{eq:TCcontroller} is augmented with a memory state $\hat{u}_{i-1}\in \R$, which stores the last received value of $u_{i-1}$. 
When at time $t_{k_{i-1}}$ a new measurement is available through the network, $\hat{u}_{i-1}$ is instantaneously reset to $u_{i-1}(t_{k_{i-1}})$. In between received measurements, $\hat{u}_{i-1}$ is kept constant in a ZOH fashion.
More precisely, dynamics of $\hat{u}_{i-1}$ can be modeled as a system with jumps in its state. In particular, its dynamics are as follows for all $k_{i-1} \in \N_0$:
\begin{equation} \label{eq:ZOH}
\scalebox{0.9}{$
\left\{\begin{array}{ll}
\dot{\hat{u}}_{i-1}(t) = 0 & \forall \, t \not= t_{k_{i-1}} \, \text{or} \,\, t = t_{k_{i-1}} \in \bigcup_{n\in\mathbb{N}} H_n \\
\hat{u}_{i-1}(t^+) = u_{i-1}(t_{k_{i-1}}) &  \forall \, t = t_{k_{i-1}} \notin \bigcup_{n\in\mathbb{N}} H_n
\end{array} \right.$}
\end{equation}
Notice that $\hat{u}_{i-1}$ is set to $u_{i-1}(t_{k_{i-1}})$ only in case of successful transmissions. 

Differently from  \eqref{eq:TCcontroller}, the controller is fed with $\hat{u}_{i-1}$, and its continuous-time dynamics are given by:
\begin{equation} \label{eq:controller}
\dot{u}_i = \mathcal{K}_{h}(u_i,\omega_i), \quad \omega_i\coloneqq \mathcal{K}_{PD}(e_i) + \hat{u}_{i-1}
\end{equation}
The interconnection between the ZOH device in \eqref{eq:ZOH} and the controller in \eqref{eq:controller} is denoted by $\mathcal{K}$ and represents the proposed hybrid controller.

\subsection{Hybrid Modeling}
The stability of the vehicle platooning is studied by analyzing the dynamics of the closed-loop system obtained by the interconnection of \eqref{eq:vehicleDynamics}, \eqref{eq:ZOH}, and \eqref{eq:controller}.  To this end, let $e_{1,i} := e_i$, $e_{2,i} := \dot{e}_i$ and $e_{3,i} := \ddot{e}_i$, by straightforward calculations, one has that for all $k_{i-1}\in\N_0$
\begin{equation} \label{eq:impulsiveSys}
\scalebox{0.9}{
$
\begin{array}{ll}
\!\!\!\!\! \left\{  \!
\begin{aligned}
& \dot{e}_{1,i} = e_{2,i} \\
& \dot{e}_{2,i} = e_{3,i}\\
& \dot{e}_{3,i} \! = \! - \! \frac{k_p}{\tau_d}  e_{1,i} \!\! - \!\! \frac{k_d}{\tau_d}  e_{2,i} \!\!-\!\!\frac{1}{\tau_d} e_{3,i} \!\! +\!\! \frac{1}{\tau_d} u_{i-1} \!\! - \!\!\frac{1}{\tau_d} \hat{u}_{i-1}\\
& \dot{u}_{i-1} = -\frac{1}{h} u_{i-1} \! + \! \frac{1}{h} \omega_{i-1} \\
& \dot{\hat{u}}_{i-1} = 0
\end{aligned}\right. & 
\begin{aligned}
&\forall \, t \! \not=\! t_{k_{i-1}} \,\, \text{or} \\
& t \! = \! t_{k_{i-1}} \!\!\in\!\! \bigcup_{n\in\mathbb{N}}\!\! H_n \\
\end{aligned}\\[1.7cm]
\!\!\!\!\! \left\{  \!
\begin{aligned}
& e_{1,i}(t^+) = e_{1,i}(t) \\
& e_{2,i}(t^+) = e_{2,i}(t)\\
& e_{3,i}(t^+) = e_{3,i}(t)\\
& u_{i-1}(t^+) = u_{i-1}(t) \\
& \hat{u}_{i-1}(t^+) = u_{i-1}(t)
\end{aligned}\right. & 
\begin{aligned}
& \forall \, t\! =\! t_{k_{i-1}}\!\! \notin \!\!\bigcup_{n\in\mathbb{N}}\!\! H_n\\
\end{aligned}\\
\end{array}    
$}
\end{equation}
For the sake of notation, notice that the dependence on time in continuous-time dynamics is omitted.
Due to the hybrid controller and network behaviors, such error dynamics are characterized by the interplay of differential equations and instantaneous jumps. Furthermore, given the aperiodicity and unpredictability of the successful transmissions, the stability of such an impulsive model is difficult to analyze via traditional tools.
Therefore, we model system \eqref{eq:impulsiveSys} into the hybrid systems framework in \cite{goebel2012hybrid}, which provides useful tools to address  stability analysis for hybrid systems.
To this end, we introduce the auxiliary variable $\sigma_{i-1} \in \mathbb{R}_{\geq 0}$, which models the hidden time-driven mechanism that triggers jumps in the controller when a new packet is received from the network. In particular, from \eqref{eq:impulsiveSys}, one obtains the following hybrid system for the error dynamics
\begin{equation}\label{eq:firstModel}
\begin{aligned}
\left\lbrace
\begin{array}{ll}
\dot{\xi}_i = f_{\xi}(\xi_i,\omega_{i-1}) & \xi_i \in C_{\xi}, \omega_{i-1} \in \mathbb{R} \\
\xi_i^+ = g_{\xi}(\xi_i) & \xi_i \in D_{\xi}
\end{array}
\right. 
\end{aligned}
\end{equation} 
where $\xi_i:=(e_{1,i},e_{2,i},e_{3,i},u_{i-1},\hat{u}_{i-1},\sigma_{i-1}) \in \mathbb{R}^{6}$ defines the state of the hybrid system, and $\omega_{i-1}$ is the input.

By following the formalism introduced in \cite{goebel2012hybrid}, $\dot{\xi}_i$ stands for the velocity of the state and $\xi_i^+$ indicates the value of the state after an instantaneous change due to received network packets. The set where the continuous evolution (flow) of the state occurs is named flow set and it is defined as 
$
C_{\xi} \coloneqq  \R^5 \times [0,(\Delta+1)T_s]
$.
According to the definition of the flow set, the state $\xi_i$ can evolve by following the flow dynamics whenever the variable $\sigma_{i-1}\in[0,(\Delta+1)T_s]$.
The flow dynamics follow the differential equation $\dot{\xi}_i = f_{\xi}(\xi_i,\omega_{i-1})$, where $f_{\xi}$ is named flow map and it is defined as 
$
f_{\xi}(\xi_i,\omega_{i-1}) := ( e_{2,i}, e_{3,i}, - \frac{k_p}{\tau_d}  e_{1,i} - \frac{k_d}{\tau_d}  e_{2,i} -\frac{1}{\tau_d} e_{3,i} \! +\! \frac{1}{\tau_d} u_{i-1}  -\frac{1}{\tau_d} \hat{u}_{i-1}, -\frac{1}{h} u_{i-1} \! + \! \frac{1}{h} \omega_{i-1},0,1)
$.
The set wherein discrete evolution (jumps) are allowed to take place is named jump set and it is defined as 
$
D_{\xi} \coloneqq \R^5 \times T_s \Theta_\Delta
$,
where $\Theta_\Delta \coloneqq \{1,2, \ldots,\Delta+1 \}$. 
According to the definition of such a jump set, the system \eqref{eq:firstModel} experiences jumps whenever $\sigma_{i-1}$ is equal to $T_s$, $2T_s$, $3T_3$, $\ldots$, $(\Delta+1)T_s$.
Instantaneous jumps follow the equation 
$
\xi_i^+ = g_{\xi}(\xi_i) := (e_{1,i}, e_{2,i}, e_{3,i}, u_{i-1}, u_{i-1}, 0)
$,
where $g_{\xi}$ is named jump map.

\begin{remark}
The model in \eqref{eq:firstModel} considers only successful transmissions. In particular, successful transmissions occur for $\sigma_{i-1}=T_s$ when no DOS occurs, whereas they occur for $\sigma_{i-1}\in\iota T_s$, with $\iota:=\{2, \ldots,\Delta+1 \}$, for DOS attacks generating a number of consecutive packet dropouts within $1$ and $\Delta$. This characteristic is captured by the definition of $D_{\xi}$. Furthermore, by definition, $C_\xi$ and $D_\xi$ overlap each other, and, when $\xi$ belongs to $C_\xi \cap D_\xi$, the state of the system can either flow or jump because of a successful transmission. As such, solutions to \eqref{eq:firstModel} are not unique. In this sense, our model captures all possible network behaviors in a unified fashion.
\end{remark}

At this stage, we introduce the change of coordinates $\eta_{i-1}:=\hat{u}_{i-1}-u_{i-1}$,
which leads, by straightforward calculation, to the following hybrid system:
\begin{equation} \label{eq:systemH}
\mathcal{H}_i \left\lbrace 
\begin{array}{ll}
\dot{x}_i = f(x_i,\omega_{i-1}) & x_i \in C, \omega_{i-1} \in \mathbb{R} \\
x_i^+ = g(x_i) & x_i \in D
\end{array}
\right. 
\end{equation}
where $x_i:=(\tilde{x}_i,\eta_{i-1},\sigma_{i-1}) \in \mathbb{R}^{6}$ is the state, and $\tilde{x}_i:=(e_{1,i},e_{2,i},e_{3,i},u_{i-1}) \in \mathbb{R}^{4}$. The flow map is given by
\begin{equation}\label{eq:flowMapHy}
f(x_i,\omega_{i-1}) :=  \left( f_{\tilde{x}}(\tilde{x}_i,\eta_{i-1},\omega_{i-1}),  f_{\eta}(\tilde{x}_i,\eta_{i-1},\omega_{i-1}),1 \right)
\end{equation}
where $f_{\tilde{x}}(\tilde{x}_i,\eta_{i-1},\omega_{i-1}) := A_{xx} \tilde{x}_i + A_{x\eta} \eta_{i-1} +  A_{x\omega} \omega_{i-1}$, $f_{\eta}(\tilde{x}_i,\eta_{i-1},\omega_{i-1}) := A_{\eta x} \tilde{x}_i - \frac{1}{h} \omega_{i-1}$,
\begin{equation}
\begin{aligned}
& A_{xx} = \left[ \begin{array}{cc}
A_e & 0 \\
0 & -\frac{1}{h} 
\end{array} \right], \, A_{x\eta} = \left[\begin{array}{cccc}
0 & 0 & -\frac{1}{\tau_d} & 0
\end{array} \right]^\top\\
& A_{x\omega} = \left[ \begin{array}{cccc}
0 & 0 & 0 & \frac{1}{h}
\end{array} \right]^\top, A_{\eta x} = \left[ \begin{array}{cccc}
0 & 0 & 0 & \frac{1}{h}
\end{array} \right]
\end{aligned}
\end{equation}
and
\begin{equation} \label{eq:AeMatrix}
A_e = \left[ \begin{array}{ccc}
0 & 1 & 0 \\
0 & 0 & 1 \\ 
- \frac{k_p}{\tau_d} & - \frac{k_d}{\tau_d} & -\frac{1}{\tau_d}
\end{array} \right]
\end{equation}
The jump map is given by $g(x_i) := (\tilde{x}, 0, 0)$, whereas the flow set and the jump set are respectively given by $C \coloneqq  \R^5 \times [0,(\Delta+1)T_s]$, and $D \coloneqq \R^5 \times T_s \Theta_\Delta$.

Similarly to \cite{dolk2017event}, we employ the signal $\omega_i$ as the performance output of the hybrid system $\mathcal{H}_i$ to evaluate string stability. In particular, we assume:
\begin{equation} \label{eq:perfOutput}
\omega_i:= C_\omega \tilde{x}_i + \eta_{i-1}
\end{equation}
where 
$C_\omega = \left[
\begin{array}{cccccc}
k_p & k_d & 0 & 1 
\end{array}
\right]$. 
By $\mathcal{H}_i^\omega$ we denote the hybrid system $\mathcal{H}_i$ augmented with the performance output $\omega_i$.

\subsection{Problem Formulation}
A platoon of vehicles controlled by CACC needs to accomplish two main goals\cite{dolk2017event}: 1) regulate the spacing error in \eqref{eq:spacingError}, and 2) attenuate disturbance and shock waves along the vehicle platooning, due, e.g., to speed variations of the leader vehicle. 

The first property is usually referred to as individual vehicle stability. When this is satisfied, if $\mathcal{V}_0$ travels at some constant speed, the CACC ensures that $\lim_{t\to\infty} e_i(t)=0$ for the rest of the vehicles in the platoon. Therefore, individual vehicle stability is strictly connected with the eigenvalues of $A_e$, which depend on gains $k_p$ and $k_d$. 
Moreover, the error dynamics reflect the dynamic response of the vehicle platooning and can influence passengers comfort.
To this end, performance requirements need to be taken into account along with the satisfaction of the individual vehicle stability. In this paper, we characterize performance requirements by introducing the set:
\begin{equation} \label{eq:performance}
\mathbb{P}(\lambda_{M},\zeta_m)\coloneqq\{  A\in \mathbb{R}^{n \times n} | \Lambda_{\max}(A) = \lambda_{M}, \, \zeta_{\min}(A)\geq \zeta_m \}
\end{equation}
where $\lambda_{M}<0$ and $\zeta_m\in(0,1]$ are design parameters. The set $\mathbb{P}(\lambda_{M},\zeta_m)$ defines constraints on the location of the eigenvalues of $A_e$ in the complex plane. 
In particular, the eigenvalues of $A_e$ associated with the slowest mode of the error dynamics have real part equal to $\lambda_M$, whereas the other eigenvalues have real part smaller than $\lambda_M$. In addition, if complex, the eigenvalues of $A_e$ also have damping ratio greater than $\zeta_m$. Graphically, the eigenvalues of $A_e$ are placed within the gray area in Fig. \ref{fig:complexplane} with the rightmost eigenvalues lying on the dashed segment.
To meet the required performance, we design controller gains such that $A_e\in\mathbb{P}(\lambda_{M},\zeta_m)$ given $\lambda_{M}<0$ and $\zeta_m\in(0,1]$.

\begin{figure}[thpb]
\centering
\includegraphics[scale=0.7]{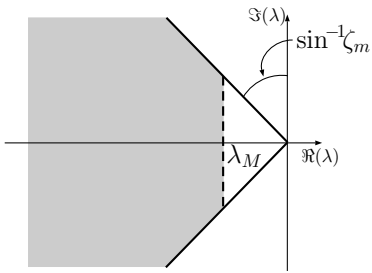}
\caption{Representation of the set $\mathbb{P}(\lambda_{M},\zeta_m)$ in the complex plane. To satisfy performance requirements, the proposed design procedure aims at placing the eigenvalues of $A_e$ within the gray area represented in this figure.}
\label{fig:complexplane}
\end{figure}

The second property we want to guarantee is also referred to as string stability of the vehicle platooning. It is related to the notion of input-output stability. String stability is widely investigated and analyzed via $\mathcal{L}_p$ stability \cite{ploeg2014lp}. 
Similarly to \cite{dolk2017event,oncu2012string,ploeg2014controller}, in this paper, we adopt the following notions to study string stability of vehicle platooning.  

\begin{definition}[$\tL2$ norm of a hybrid signal \cite{fichera2016lmi}]\label{def:tL2norm}
For a hybrid signal $\phi$, the $\tL2$ norm is given by
$
\Vert \phi \Vert_{\L2t} = \sqrt{\int_\mathcal{I} |\phi(r,j(r))|^2 dr}
$,
where $\mathcal{I}\coloneqq[0,\sup_t \dom\phi] \cap \dom_t \phi$. When $\Vert \phi \Vert_{\L2t}$ is finite, we say that $\phi\in\L2t$.
\end{definition}

\begin{definition}[$\L2t$-stability]\label{def:L2tstability}
The hybrid system $\mathcal{H}_i^\omega$ is said to be $\mathcal{L}_{2t}$-stable with respect to a closed set $\mathcal{A}$ from the input $\omega_{i-1} \in \L2t$ to the output $\omega_i \in \L2t$ with an $\mathcal{L}_{2t}$-gain less than or equal to $\theta$, if there exists $\alpha>0$ such that any solution pair\footnote{A pair $(\phi_i,\omega_{i-1})$ is a solution pair to $\mathcal{H}_i^{\omega}$ if it satisfies its dynamics; see \cite{cai2009characterizations} for more details.} $(\phi_i,\omega_{i-1})$ to $\mathcal{H}_i^{\omega}$ satisfies 
\begin{equation} \label{eq:L2tstability}
\Vert \omega_i \Vert_{\L2t}  \leq \alpha |\phi_i(0,0)|_\mathcal{A} + \theta \Vert \omega_{i-1}\Vert_{\L2t}
\end{equation}
\end{definition}

\begin{definition}[String stability]\label{def:Stringstability}
The vehicle platooning given by \eqref{eq:leader}, \eqref{eq:vehicleDynamics}, \eqref{eq:ZOH}, and \eqref{eq:controller} is said to be string stable if the hybrid systems $\mathcal{H}_i^\omega$, $i \in P_m$, are $\mathcal{L}_{2t}$-stable from the input $\omega_{i-1} \in \L2t$ to the output $\omega_i \in \L2t$ with an $\mathcal{L}_{2t}$-gain less than or equal to one.
\end{definition}

\begin{remark}
Because of vehicles homogeneity, the analysis of string stability of the whole vehicle platooning can be streamlined by focusing on the $\L2t$-stability of $\mathcal{H}_i^\omega$, for $i \in P_m$. A similar approach can be found in \cite{dolk2017event}.
\end{remark}

A critical aspect of the string stability is that it is negatively influenced by the IVC. Indeed, string stability can be degraded or compromised when network imperfections lead to prolonged unavailability of updated measurements; see  \cite{oncu2012string,kester2014critical,oncu2014cooperative}. Therefore, it is important to design a CACC such that the vehicle platooning maintains string stability for the largest achievable value of MANSD (identified by $\Delta$). Furthermore, estimating $\Delta$ provides an important metric for the evaluation of the resiliency of the overall platooning concerning the DOS attacks.

The problem we solve in this paper is formalized as follows:

\begin{problem} \label{prob:problem}
Given the platooning parameters $h$ and $\tau_d$, and $\mathbb{P}$ as in \eqref{eq:performance}, design gains $k_p$ and $k_d$ for the hybrid controller $\mathcal{K}$ such that the vehicle platooning given by \eqref{eq:leader}, \eqref{eq:vehicleDynamics}, \eqref{eq:ZOH}, and \eqref{eq:controller} satisfies the following properties with the largest achievable value of $\Delta$: 
\begin{enumerate}[({P}1)]
\item Individual vehicle stability with performance $\mathbb{P}$, i.e., $A_e\in\mathbb{P}$.
\item String stability, i.e., $\mathcal{L}_{2t}$-stability with an $\mathcal{L}_{2t}$-gain less than or equal to one, and $0$-input exponential stability.
\end{enumerate} 
\end{problem}

\section{Controller Design}
In this section, we illustrate our approach to solve Problem~\ref{prob:problem}. After describing how to meet the required performance for the individual vehicle stability, we provide sufficient conditions for $\L2t$-stability of $\mathcal{H}_i^\omega$, i.e., string stability of the vehicle platoon. Finally, we describe a procedure to design the controller gains to maximize the value of $\Delta$, while satisfying the two stability properties.

\subsection{Individual Vehicle Stability with Performance Requirements} \label{sec:indStability} 
To ensure individual vehicle stability with satisfactorily performance, we design $k_p$ and $k_d$ such that $A_e\in\mathbb{P}(\lambda_{M},\zeta_m)$. In particular, we aim at identifying values of $k_p$ and $k_d$ such that for any matrix $A_e\in\mathbb{P}(\lambda_{M},\zeta_m)$ one of the following conditions holds:
\begin{enumerate}[({C}1)]
\item $A_e$ has a unique real eigenvalue equal to $\lambda_{M}$ and two complex conjugate eigenvalues with real part less than or equal to $\lambda_{M}$ with damping ratio greater than $\zeta_m$;  \label{C1}
\item $A_e$ has a single couple of complex conjugate eigenvalues with real part equal to $\lambda_{M}$ and damping ratio greater than $\zeta_m$, and the other real eigenvalue is less than $\lambda_{M}$. \label{C2}
\end{enumerate}
\begin{remark}
Notice that, due to $A_e\in \R^{3\times3}$, $A_e\in\mathbb{P}$ if and only if either C\ref{C1} or C\ref{C2} are satisfied. In particular, to satisfy $A_e\in\mathbb{P}$ either C\ref{C1} or C\ref{C2} must hold. To this end, notice that conditions C\ref{C1} and C\ref{C2} can hold simultaneously for some specific selection of $k_p$ and $k_d$. When this happens, all the eigenvalues of $A_e$ have the same real part, which is equal to $\lambda_M$. 
\end{remark}
Next, we provide necessary and sufficient conditions on $k_p$ and $k_d$ such that C\ref{C1} or C\ref{C2} hold.

\begin{proposition}[N.S.C. for C1] \label{prop:C1}
Let $k_p, \, k_d \in \R$, $\lambda_{M}\in \R_{<0}$, and $\zeta_m\in \R_{>0}$. Then, C\ref{C1} is satisfied if and only if the following conditions hold:

\begin{subequations}
\label{eq:Condition1} 
\begin{align} 
&	\begin{aligned}
		k_d = f_{C1}(k_p):= -\frac{1}{\lambda_M} k_p - \lambda_M^2 \tau_d-\lambda_M \label{P1:kpf} \\
	\end{aligned}\\
&	\begin{aligned}
		k_p \leq \frac{|\lambda_M|(\lambda_M \tau_d + 1)^2}{4 \tau_d \zeta_m^2} \coloneqq \bar{k}_{p_{C1}} \label{P1:zeta}\\
	\end{aligned}\\
&	\begin{aligned}
		k_p \geq 2 \tau_d \lambda_M^3 +\lambda_M^2 \coloneqq \underaccent{\bar}{k}_{p_{C1}} \label{P1:others} \\
    \end{aligned} \\
&	\begin{aligned}
    	\lambda_M > - \frac{1}{3\tau_d} \label{P1:lamda} 
    \end{aligned}
%     &	\begin{aligned}
%		\scalebox{1}{$k_d = f_{C1}(k_p):= -\frac{1}{\lambda_M} k_p - \lambda_M(\lambda_M \tau_d+1)$} \label{P1:kpf} \\
%	\end{aligned}\\
%&	\begin{aligned}
%		\scalebox{1}{$k_p \leq \frac{|\lambda_M|(\lambda_M \tau_d + 1)^2}{4 \tau_d \zeta_m^2} \coloneqq \bar{k}_{p_{C1}}$} \label{P1:zeta}\\
%	\end{aligned}\\
%&	\begin{aligned}
%		\scalebox{1}{$k_p \geq 2 \tau_d \lambda_M^3 +\lambda_M^2 \coloneqq \underaccent{\bar}{k}_{p_{C1}}$} \label{P1:others} \\
%    \end{aligned} \\
%&	\begin{aligned}
%    	\scalebox{1}{$\lambda_M > - \frac{1}{3\tau_d}$} \label{P1:lamda} 
%    \end{aligned}
\end{align}
\end{subequations}
\end{proposition}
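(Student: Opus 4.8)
The plan is to reduce Proposition~\ref{prop:C1} to an elementary analysis of the roots of the characteristic polynomial of $A_e$, exploiting that \eqref{eq:AeMatrix} is in companion form.

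First I would compute $p(s):=\det(sI-A_e)=s^{3}+\tfrac{1}{\tau_d}s^{2}+\tfrac{k_d}{\tau_d}s+\tfrac{k_p}{\tau_d}$. Condition C\ref{C1} forces $\lambda_M$ to be an eigenvalue, i.e.\ $p(\lambda_M)=0$, which is a linear equation in $k_d$ whose unique solution is exactly \eqref{P1:kpf}; so \eqref{P1:kpf} is necessary, and from now on I assume it. Under \eqref{P1:kpf} one can divide out $s-\lambda_M$, and matching the $s^{2}$-coefficient and the constant term gives $p(s)=(s-\lambda_M)\,q(s)$ with $q(s)=s^{2}+as+b$, $a=\lambda_M+\tfrac{1}{\tau_d}$, $b=-\tfrac{k_p}{\tau_d\lambda_M}=\tfrac{k_p}{\tau_d|\lambda_M|}$. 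The two remaining eigenvalues of $A_e$ are the roots of $q$, so C\ref{C1} becomes the requirement that $\lambda_M$ be the simple dominant eigenvalue, that the roots of $q$ lie in $\{\Re\le\lambda_M\}$, and that the second-order factor $q$ have damping ratio exceeding $\zeta_m$.

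Three observations then yield the remaining conditions. (i) The $s^{2}$-coefficient of $p$ equals $\tfrac{1}{\tau_d}$ regardless of the gains, so the roots of $q$ sum to $-a$; when they form a complex pair their real part is $-a/2=-\tfrac12(\lambda_M+\tfrac1{\tau_d})$, and requiring this not to exceed $\lambda_M$ gives $\lambda_M\ge-\tfrac1{3\tau_d}$. The strict inequality \eqref{P1:lamda} is what is actually needed to keep $\lambda_M$ the \emph{unique} dominant eigenvalue: at $\lambda_M=-\tfrac1{3\tau_d}$ the rest of the spectrum is forced onto the line $\Re=\lambda_M$. Since $a=\lambda_M+\tfrac1{\tau_d}>0$ and $b>0$ under these conditions, the damping ratio $\zeta(q)=\tfrac{a}{2\sqrt b}$ is well defined. (ii) The damping condition $\zeta(q)>\zeta_m$ is equivalent to $b<\tfrac{a^{2}}{4\zeta_m^{2}}$; substituting $a,b$ and clearing denominators gives $k_p<\tfrac{|\lambda_M|(\lambda_M\tau_d+1)^{2}}{4\tau_d\zeta_m^{2}}=\bar{k}_{p_{C1}}$, which is \eqref{P1:zeta} (with the equality case $\zeta(q)=\zeta_m$ absorbed into the ``$\le$''). (iii) A root of $q$ can lie to the right of $\lambda_M$ only when $q$ has real roots, in which case the requirement is $\max\{\text{roots of }q\}\le\lambda_M$, i.e.\ $\sqrt{a^{2}-4b}\le 2\lambda_M+a$; as \eqref{P1:lamda} makes $2\lambda_M+a\ge0$ I may square this, and after substituting $a,b$ it collapses exactly to $k_p\ge 2\tau_d\lambda_M^{3}+\lambda_M^{2}=\underaccent{\bar}{k}_{p_{C1}}$, i.e.\ \eqref{P1:others}. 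One checks this bound is vacuous in the complex-pair regime: $a^{2}-4b$ is decreasing in $k_p$ and equals $(3\lambda_M+\tfrac1{\tau_d})^{2}\ge0$ at $k_p=\underaccent{\bar}{k}_{p_{C1}}$, so a negative discriminant already forces $k_p>\underaccent{\bar}{k}_{p_{C1}}$. This gives necessity of all four conditions.

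Sufficiency is the same chain read backwards: \eqref{P1:kpf} makes $\lambda_M$ an eigenvalue and supplies the factorization; \eqref{P1:lamda} puts the roots of $q$ strictly to the left of $\Re=\lambda_M$; \eqref{P1:others} together with \eqref{P1:lamda} keeps the larger (real) root of $q$ from overtaking $\lambda_M$, so $\lambda_M$ stays the simple dominant eigenvalue; and \eqref{P1:zeta} delivers $\zeta(q)\ge\zeta_m$. I expect the real obstacle to be step (iii): realizing that the real-rooted configuration of $q$ is the only way a mode can overtake $\lambda_M$, checking the sign of $2\lambda_M+a$ that justifies the squaring, and carrying the algebra so the bound is exactly $2\tau_d\lambda_M^{3}+\lambda_M^{2}$ rather than something messier. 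A secondary nuisance is keeping the strict-versus-non-strict bookkeeping straight at the three boundary values, where ``unique'' and ``$>\zeta_m$'' degenerate into equalities and the statement rounds them off with ``$\le$''/``$\ge$''.
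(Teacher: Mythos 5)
Your proof is correct and follows essentially the same route as the paper's: impose $\rho(\lambda_M)=0$ to get \eqref{P1:kpf}, factor the characteristic polynomial as $(s-\lambda_M)$ times a quadratic, and translate the damping and dominance constraints on that quadratic into \eqref{P1:zeta}--\eqref{P1:lamda}. The only difference is presentational — the paper invokes the Routh--Hurwitz criterion (on the shifted quadratic) where you carry out the explicit discriminant and case analysis, which makes your step (iii) somewhat more detailed than the published argument but not a different method.
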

\medskip

\begin{proof}
\textit{Sufficiency}: Assume that \eqref{eq:Condition1} hold and let 
\begin{equation}
\scalebox{0.95}{$\rho(s;k_p,k_d):= \det(sI-A_e) = \frac{k_p}{\tau_d}+\frac{k_d}{\tau_d} s+ \frac{1}{\tau_d}s^2 + s^3$}
\end{equation}
be the characteristic polynomial of $A_e$. By replacing the expression of $k_d$ in \eqref{P1:kpf}, one gets
\scalebox{0.9}{$
\rho(s;k_p):= \rho_1(s; k_p)(s-\lambda_M)
$}
where 
\begin{equation}
\scalebox{0.95}{$\rho_1(s; k_p) :=s^2+\frac{\lambda_M \tau_d+1}{\tau_d} s-\frac{k_p}{\lambda_M \tau_d}$}
\end{equation}
This shows that $\lambda_M$ is an eigenvalue of $A_e$. Now, observe that 
\begin{equation}
\scalebox{0.95}{$
\zeta(\rho_1(s; k_p))=(\lambda_M \tau_d+1)/(2 \tau_d \sqrt{-\frac{k_p}{\lambda_M \tau_d}}) 
$}
\end{equation}
hence, from \eqref{P1:zeta} it follows that 
\scalebox{0.9}{$
\zeta(\rho_1(s; k_p))\geq\zeta_{m}
$}.
To conclude, it suffices to observe that, thanks to the Routh-Hurwitz criterion, \eqref{P1:others} and \eqref{P1:lamda} ensure that the real part of the eigenvalues of $\rho_1(s; k_p)$ is less than or equal to $\lambda_M$. Hence, \eqref{eq:Condition1} implies C1. This concludes the proof of sufficiency. 

\textit{Necessity}: Assume that C1 holds. Then, it follows that $\rho(s;k_p, k_d)$ can be factorized as follows
\scalebox{0.9}{$
\rho(s;k_p, k_d)=(s-\lambda_M)\rho_1(s;k_p, k_d)
$},
and $\rho_1$ is such that \scalebox{0.9}{$\zeta(\rho_1(s;k_p, k_d))\geq\zeta_{m}$} and \scalebox{0.9}{$\Re(\rho_1(s;k_p, k_d))\leq\lambda_M$}. Straightforward calculations yields
\begin{equation}
\scalebox{0.9}{$
\rho_1(s; k_p, k_d)=s^2+\frac{\lambda_M \tau_d+1}{\tau_d} s+\frac{\lambda_M^2 \tau_d+\lambda_M+k_d}{\tau_d}
$}
\end{equation}
which, in turn, shows that \scalebox{0.9}{$\zeta(\rho_1(s;k_p, k_d))\geq\zeta_{m}$} and \scalebox{0.9}{$\Lambda_{\max}(\rho_1(s;k_p, k_d))\leq\lambda_M$} implies \eqref{eq:Condition1}. This concludes the proof of necessity.
\end{proof}

Now we provide necessary and sufficient conditions on the gains $k_p$ and $k_d$ to guarantee C\ref{C2}.
\begin{proposition}[N.S.C. for C2] \label{prop:C2}
Let $k_p, \, k_d \in \R$, $\lambda_{M}\in \R_{<0}$, and $\zeta_m\in (0,1)$. Then, C\ref{C2} holds if and only if the following conditions hold:
\begin{subequations}
\label{eq:Condition2}
\begin{align}
 &	\begin{aligned}
    	k_d = f_{C2}(k_p):= - \frac{8 \lambda_M^3 \tau_d^2 + 8 \lambda_M^2 \tau_d +2 \lambda_M -\tau_d k_p }{2 \lambda_M \tau_d +1} \label{P2:kpf} \\
	\end{aligned}\\    
 &	\begin{aligned}
    	k_p \leq \frac{\lambda_M^2 (2 \lambda_M \tau_d+1)}{\zeta_m^2} \coloneqq \bar{k}_{p_{C2}} \label{P2:zeta}\\
	\end{aligned}\\    
 &	\begin{aligned}    
    	k_p >  2 \tau_d \lambda_M^3 +\lambda_M^2  \coloneqq \underaccent{\bar}{k}_{p_{C2}} \label{P2:others} \\
	\end{aligned}\\    
 &	\begin{aligned}    
    	\lambda_M > - \frac{1}{3\tau_d} \label{P2:lamda}      
   	\end{aligned}
%   	 &	\begin{aligned}
%    	\scalebox{1}{$k_d = f_{C2}(k_p):= - \frac{8 \lambda_M^3 \tau_d^2 + 8 \lambda_M^2 \tau_d +2 \lambda_M -\tau_d k_p }{2 \lambda_M \tau_d +1}$} \label{P2:kpf} \\
%	\end{aligned}\\    
% &	\begin{aligned}
%    	\scalebox{1}{$k_p \leq \frac{\lambda_M^2 (2 \lambda_M \tau_d+1)}{\zeta_m^2} \coloneqq \bar{k}_{p_{C2}}$} \label{P2:zeta}\\
%	\end{aligned}\\    
% &	\begin{aligned}    
%    	\scalebox{1}{$k_p >  2 \tau_d \lambda_M^3 +\lambda_M^2  \coloneqq \underaccent{\bar}{k}_{p_{C2}}$} \label{P2:others} \\
%	\end{aligned}\\    
% &	\begin{aligned}    
%    	\scalebox{1}{$\lambda_M > - \frac{1}{3\tau_d}$} \label{P2:lamda}      
%   	\end{aligned}
\end{align}
\end{subequations}
\end{proposition}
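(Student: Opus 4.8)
The plan is to follow the template of the proof of Proposition~\ref{prop:C1}, the only structural change being that now it is the complex conjugate pair, rather than a real root, that is pinned to the line $\Re(s)=\lambda_M$. Writing the characteristic polynomial of $A_e$ as $\rho(s;k_p,k_d)=s^3+\tfrac{1}{\tau_d}s^2+\tfrac{k_d}{\tau_d}s+\tfrac{k_p}{\tau_d}$, I would first note that C\ref{C2} is equivalent to $\rho$ admitting a factorization
\[
\rho(s;k_p,k_d)=(s-\mu)\bigl(s^2-2\lambda_M s+c\bigr),\qquad \mu\in\R,\ \mu<\lambda_M,\ c>\lambda_M^2,
\]
together with the damping bound $\tfrac{|\lambda_M|}{\sqrt c}\ge\zeta_m$: indeed $s^2-2\lambda_M s+c$ has genuinely complex roots precisely when $c>\lambda_M^2$, in which case those roots have real part $\lambda_M$ and damping ratio $\zeta(s^2-2\lambda_M s+c)=\tfrac{|\lambda_M|}{\sqrt c}$. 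Everything then reduces to matching coefficients between this product and $\rho$ and rewriting the three resulting scalar constraints.

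\emph{Sufficiency.} Suppose \eqref{eq:Condition2} holds. From \eqref{P2:lamda} and $\tau_d>0$ one gets $2\lambda_M\tau_d+1>\tfrac13>0$, so $f_{C2}$ is well defined; set $\mu:=-2\lambda_M-\tfrac{1}{\tau_d}$ and $c:=\tfrac{k_p}{2\lambda_M\tau_d+1}$, which is positive because \eqref{P2:others} forces $k_p>0$. Substituting $k_d=f_{C2}(k_p)$ into $\rho$ and carrying out the routine multiplication confirms $\rho(s;k_p,f_{C2}(k_p))=(s-\mu)(s^2-2\lambda_M s+c)$: the $s^2$-coefficient matches since $\mu+2\lambda_M=-\tfrac{1}{\tau_d}$, the constant term since $-\mu c=\tfrac{k_p}{\tau_d}$, and the $s$-coefficient by the very definition \eqref{P2:kpf}. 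It then remains to rewrite the three inequalities: $\mu<\lambda_M$ is equivalent to $-3\lambda_M<\tfrac{1}{\tau_d}$, i.e. to \eqref{P2:lamda}; the discriminant $4(\lambda_M^2-c)$ of the quadratic factor is negative iff $c>\lambda_M^2$, i.e. iff \eqref{P2:others} (so this factor indeed contributes a complex conjugate pair with real part $\lambda_M$); and $\tfrac{|\lambda_M|}{\sqrt c}\ge\zeta_m$ iff $c\le\tfrac{\lambda_M^2}{\zeta_m^2}$, i.e. iff \eqref{P2:zeta}. Hence C\ref{C2} holds.

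\emph{Necessity.} Conversely, suppose C\ref{C2} holds. Then $\rho$ factors as $(s-\mu)(s^2-2\lambda_M s+c)$ with $\mu\in\R$, $\mu<\lambda_M$, the quadratic factor carrying the complex conjugate pair (so $c>\lambda_M^2$) with damping ratio $\tfrac{|\lambda_M|}{\sqrt c}\ge\zeta_m$. Matching the $s^2$-coefficient of $(s-\mu)(s^2-2\lambda_M s+c)$ with that of $\rho$ gives $\mu=-2\lambda_M-\tfrac{1}{\tau_d}$, so $\mu<\lambda_M$ is exactly \eqref{P2:lamda}, which in turn gives $2\lambda_M\tau_d+1>0$. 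Matching the constant term gives $-\mu c=\tfrac{k_p}{\tau_d}$, hence $c=\tfrac{k_p}{2\lambda_M\tau_d+1}$; substituting this into $c>\lambda_M^2$ yields \eqref{P2:others} and into $\tfrac{|\lambda_M|}{\sqrt c}\ge\zeta_m$ yields \eqref{P2:zeta}. Finally, matching the $s$-coefficient gives $\tfrac{k_d}{\tau_d}=c+2\lambda_M\mu$, which after substituting the expressions for $\mu$ and $c$ is precisely \eqref{P2:kpf}. This closes the equivalence.

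The calculations are elementary, so I do not expect a genuine obstacle; the one point needing care is sign bookkeeping, and specifically the fact $2\lambda_M\tau_d+1>0$ (equivalently $\lambda_M>-\tfrac{1}{2\tau_d}$, a consequence of \eqref{P2:lamda}) --- this is what keeps $f_{C2}$ finite and lets every inequality be manipulated without a reversal. It is also worth recording why the hypothesis is $\zeta_m\in(0,1)$ rather than $(0,1]$: a genuinely complex pair always has damping ratio strictly below $1$, so C\ref{C2} is unattainable when $\zeta_m=1$. The heaviest part is the polynomial multiplication in the sufficiency step, which is entirely mechanical.
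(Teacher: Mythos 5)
Your proof is correct and follows essentially the same route as the paper: both hinge on the factorization $\rho(s)=\bigl(s+\tfrac{2\lambda_M\tau_d+1}{\tau_d}\bigr)\bigl(s^2-2\lambda_M s+\tfrac{k_p}{2\lambda_M\tau_d+1}\bigr)$ and then translate the location of the real root, the damping ratio, and the complexity of the quadratic's roots into \eqref{P2:lamda}, \eqref{P2:zeta}, and \eqref{P2:others}. The only cosmetic difference is that in the necessity direction you recover \eqref{P2:kpf} by matching coefficients, whereas the paper solves $\Re(\rho(\lambda_M+j\omega))=0$ and $\Im(\rho(\lambda_M+j\omega))=0$; these are equivalent elementary computations, and your closing remarks on $2\lambda_M\tau_d+1>0$ and on why $\zeta_m\in(0,1)$ is required are accurate.
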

\medskip

\begin{proof}
\textit{Sufficiency}: Assume that \eqref{eq:Condition2} hold and let $\rho(s;k_p,k_d)$
be the characteristic polynomial of $A_e$. By using the expression of $k_d$ in \eqref{P2:kpf}, one gets
\begin{equation}
\scalebox{0.95}{$
\rho(s; k_p)\!\!=\!\!\underbrace{\left( s \! +\!\frac{2 \lambda_M \tau_d+1}{\tau_d} \right)}_{\rho_1(s; k_p)}\!\underbrace{\left(s^{2} \!- \! 2  \lambda_M s \!+\!\frac {k_p}{2 \lambda_M \tau_d+1}\right)}_{\rho_2(s; k_p)}
$}
\end{equation}
At this stage, notice that \eqref{P2:lamda} implies that the unique root of $\rho_1$ is smaller than $\lambda_M$. To conclude, we analyze the roots of $\rho_2$. Specifically, from the definition of $\rho_2$ it turns out that
\scalebox{0.9}{$
\zeta(\rho_2(s;k_p))=-\lambda_M/(\sqrt{\frac{k_p}{2 \lambda_M \tau_d+1}})
$}, 
which from \eqref{P2:zeta}-\eqref{P2:others} gives \scalebox{0.9}{$1>\zeta(\rho_2(s;k_p))\geq\zeta_m$}; this ensures that \scalebox{0.9}{$\Im(\rho_2(s;k_p))\neq 0$}. Moreover, straightforward calculations show that
\scalebox{0.9}{$
\Lambda_{\max}(\rho_2(s;k_p))=\lambda_M
$}.
Thus, C2 holds and this concludes the proof of sufficiency.

\textit{Necessity}: Assume that C2 holds. Then, it follows that $\rho(s;k_p, k_d)$ can be factorized as follows
\begin{equation}
\scalebox{0.95}{$
\rho(s;k_p, k_d)=(s-\lambda)\underbrace{(s-\lambda_M + j \omega)(s-\lambda_M - j \omega)}_{\rho_3(s;k_p,k_d)}
$}
\end{equation}
with \scalebox{0.9}{$\lambda \leq \lambda_M$}, \scalebox{0.9}{$\omega>0$}, and $\rho_3$ is such that \scalebox{0.9}{$1>\zeta(\rho_3(s;k_p,k_d))\geq\zeta_{m}$}. 
By solving the system of equations \scalebox{0.9}{$\Re(\rho(\lambda_M+ j \omega; k_p, k_d)) = 0$} and \scalebox{0.9}{$\Im(\rho(\lambda_M + j \omega; k_p, k_d)) = 0$} in the variables $k_p$ and $\omega$ one obtains \eqref{P2:kpf}.  By replacing \eqref{P2:kpf} in $\rho_3$ one gets \scalebox{0.9}{$\rho_3(s;k_p) := s^2-2 s \lambda_M + \frac{k_p}{2 \lambda_M \tau_d+1}$}. At this stage, straightforward calculations yield \scalebox{0.9}{$1 > \zeta(\rho_3(s;k_p))\geq\zeta_{m}$} and \scalebox{0.9}{$\Lambda_{\max}(\rho_3(s;k_p))=\lambda_M$}, which, in turn, implies, respectively, \eqref{P2:zeta} and \eqref{P2:others}.  Moreover, from \scalebox{0.9}{$\rho(s;k_p, f_{C2}(k_p))$}, one has that \scalebox{0.9}{$\lambda \coloneqq -\frac{2 \lambda_M \tau_d+1}{\tau_d}$} and since \scalebox{0.9}{$\lambda \leq \lambda_M$}, \eqref{P2:lamda} holds. This concludes the proof of necessity.
\end{proof}

\subsection{Sufficient Conditions for Platooning Stability} \label{sec:stability} 
The previous subsection describes how to obtain the set of gains $k_p$ and $k_d$ such that the individual vehicle stability has satisfactory performance. In this subsection, we consider $k_p$ and $k_d$ as given, and we study the stability of $\mathcal{H}_i^\omega$, which also includes the network dynamics.

Our approach aims at formulating the control problem as a set stabilization problem. In particular, our approach consists of analyzing the stability properties of the following compact set
\begin{equation} \label{eq:setA}
\mathcal{A} \coloneqq \{0\} \times \{0\} \times [0,(\Delta+1)T_s]
\end{equation}
The following definition formalizes these properties.
\begin{definition}[Exponential input-to-state stability \cite{MercoLCSS}]
\label{defISS}
Let $\mathcal{A}\subset\R^{6}$ be closed. The hybrid system $\mathcal{H}_i^{\omega}$ is \emph{exponentially input-to-state-stable} (eISS) with respect to the set $\mathcal{A}$ if there exist $\kappa, \lambda>0$, and $p\in\mathcal{K}$ such that each maximal solution pair $(\phi_i,\omega_{i-1})$ to $\mathcal{H}_i^{\omega}$ is complete, and, if $\Vert \omega_{i-1} \Vert_\infty$ is finite, it satisfies
\begin{equation}
\label{eq:defeISS}
\vert \phi_i(t,j)\vert_{\mathcal{A}}\leq \max\{\kappa e^{-\lambda (t+j)}\vert \phi_i(0,0)\vert_{\mathcal{A}}, p(\Vert \omega_{i-1} \Vert_{\infty})\}
\end{equation}
for each $(t,j)\in\dom\phi_i$, where $\Vert \omega_{i-1} \Vert_\infty$ denotes the $\mathcal{L}_\infty$ norm of the hybrid signal $\omega_{i-1}$ as defined in \cite{nevsic2013finite}.
\end{definition} 
Moreover, to satisfy string stability, $\mathcal{H}_i^{\omega}$, $i \in P_m$, must be $\L2t$-stable from the input $\omega_{i-1}$ to the output $\omega_i$ with an $\mathcal{L}_{2t}$-gain less than or equal to one.
It is worth mentioning that whenever eISS and $\L2t$-stability are satisfied, the vehicle platooning given by \eqref{eq:leader}, \eqref{eq:vehicleDynamics}, \eqref{eq:ZOH}, and \eqref{eq:controller} satisfies individual vehicle stability and string stability. In the following, we identify sufficient conditions to ensure those two stability properties. 
First, we employ Lyapunov theory for hybrid systems to provide conditions for eISS and $\L2t$-stability of $\mathcal{H}_i^{\omega}$ (Assumption~\ref{ass:lyapunov} and  Theorem~\ref{th:suffCond}). Then, we give sufficient conditions for eISS and $\L2t$-stability in the form of matrix inequalities (Theorem~\ref{th:practical} and Lemma~\ref{lemma:Mco}).

Consider the following assumption. 

\begin{assumption} \label{ass:lyapunov}
There exist two continuously differentiable functions
$V_1 : \mathbb{R}^{4} \rightarrow  \mathbb{R}$, $V_2 : \mathbb{R}^{2} \rightarrow  \mathbb{R}$, and positive real numbers $\alpha_1$, $\alpha_2$, $\beta_1$, $\beta_2$, $\lambda_t$, and $\epsilon$ such that
\begin{enumerate}[({A}1)]
\item $\alpha_1 |\tilde{x}_i|^2 \leq V_1(\tilde{x}_i) \leq \alpha_2 |\tilde{x}_i|^2, \quad \forall x_i \in C$ \label{ass:lyapunov:A1}
\item $\beta_1 |\eta_{i-1}|^2 \leq V_2(\eta_{i-1},\sigma_{i-1}) \leq \beta_2 |\eta_{i-1}|^2, \quad \forall x_i \in C$
\label{ass:lyapunov:A2}
\item $V_2(0,0) \leq V_2(\eta_{i-1},\sigma_{i-1})$, $\forall \eta_{i-1} \in \R, \, \sigma_{i-1} \in T_s \Theta_\Delta$ \label{ass1:jump}
\item the function $x_i \mapsto V(x_i) := V_1(\tilde{x}_i)+V_2(\eta_{i-1},\sigma_{i-1})$ satisfies  $\langle \nabla V(x_i), f(x_i,\omega_{i-1}) \rangle \leq -2\lambda_t V(x_i) - \omega_i^2 + \theta^2 \omega_{i-1}^2$ for each $x_i \in C, \omega_{i-1} \in \mathbb{R}$, where $\omega_i^2 = \tilde{x}_i^\top C_\omega^\top C_\omega \tilde{x}_i + \eta_{i-1}^2 + 2 C_\omega \tilde{x}_i \eta_{i-1}$ from \eqref{eq:perfOutput}. \label{ass1:Vdot}
\end{enumerate} 
\end{assumption}

Based on Assumption~\ref{ass:lyapunov}, the result given next provides sufficient conditions for eISS and $\L2t$-stability of $\mathcal{H}_i^\omega$. 

\begin{theorem} \label{th:suffCond}
Let Assumption \ref{ass:lyapunov} hold. Then:
\begin{enumerate}[($i$)]
\item The hybrid system $\mathcal{H}_i^{\omega}$ is eISS with respect to $\mathcal{A}$;
\item The hybrid system $\mathcal{H}_i^{\omega}$ is $\L2t$-stable from input $\omega_{i-1}$ to output $\omega_i$ with an $\mathcal{L}_{2t}$-gain less than or equal to $\theta$.
\end{enumerate}
\end{theorem}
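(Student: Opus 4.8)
The plan is to derive both claims from the composite function $V(x_i):=V_1(\tilde x_i)+V_2(\eta_{i-1},\sigma_{i-1})$ furnished by Assumption~\ref{ass:lyapunov}, after first recording the two structural properties that make it a Lyapunov function for the set $\mathcal{A}$ in \eqref{eq:setA}. Since the $\sigma_{i-1}$-component is left free in $\mathcal{A}$, one has $|x_i|_{\mathcal{A}}^2=|\tilde x_i|^2+|\eta_{i-1}|^2$, so combining (A1) and (A2) yields the sandwich $\underline{c}\,|x_i|_{\mathcal{A}}^2\le V(x_i)\le\overline{c}\,|x_i|_{\mathcal{A}}^2$ on $C$ with $\underline{c}:=\min\{\alpha_1,\beta_1\}$ and $\overline{c}:=\max\{\alpha_2,\beta_2\}$. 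Moreover, because $g(x_i)=(\tilde x_i,0,0)$, property (A3) gives $V(g(x_i))=V_1(\tilde x_i)+V_2(0,0)\le V_1(\tilde x_i)+V_2(\eta_{i-1},\sigma_{i-1})=V(x_i)$ for every $x_i\in D$, i.e., $V$ does not increase across jumps.

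For claim (ii) I would integrate the dissipation inequality (A4) along an arbitrary solution pair $(\phi_i,\omega_{i-1})$. On each flow interval, $\tfrac{d}{dt}V(\phi_i(t,j))\le-2\lambda_t V(\phi_i(t,j))-\omega_i^2+\theta^2\omega_{i-1}^2\le-\omega_i^2+\theta^2\omega_{i-1}^2$, using $\lambda_t>0$ and $V\ge 0$; across each jump $V$ does not increase by the property above. Summing the flow contributions and telescoping the jumps over $\dom\phi_i$ gives $V(\phi_i(t,j))-V(\phi_i(0,0))\le-\int_{\mathcal{I}}\omega_i^2\,dr+\theta^2\int_{\mathcal{I}}\omega_{i-1}^2\,dr$ with $\mathcal{I}$ as in Definition~\ref{def:tL2norm}; dropping $V(\phi_i(t,j))\ge 0$ and using $V(\phi_i(0,0))\le\overline{c}\,|\phi_i(0,0)|_{\mathcal{A}}^2$ yields $\Vert\omega_i\Vert_{\L2t}^2\le\overline{c}\,|\phi_i(0,0)|_{\mathcal{A}}^2+\theta^2\Vert\omega_{i-1}\Vert_{\L2t}^2$, whence $\sqrt{a+b}\le\sqrt a+\sqrt b$ delivers \eqref{eq:L2tstability} with $\alpha=\sqrt{\overline{c}}$ and $\L2t$-gain $\theta$.

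For claim (i) I would first establish completeness of maximal solutions: the flow map is affine in $x_i$ for fixed $\omega_{i-1}$, so no finite escape time occurs; $\sigma_{i-1}$ stays in $[0,(\Delta+1)T_s]$, and whenever $\sigma_{i-1}=(\Delta+1)T_s$ the state is in $D$, so a jump (resetting $\sigma_{i-1}$ to $0\in C$) is always available, hence every maximal solution is complete. Next, dropping the nonpositive term $-\omega_i^2$ in (A4) gives $\tfrac{d}{dt}V\le-2\lambda_t V+\theta^2\Vert\omega_{i-1}\Vert_\infty^2$ along flows; integrating this on successive flow intervals, using $V(g(\cdot))\le V(\cdot)$ at jumps, and bounding the accumulated constant forcing by $\int_0^t e^{-2\lambda_t(t-s)}\theta^2\Vert\omega_{i-1}\Vert_\infty^2\,ds\le\tfrac{\theta^2}{2\lambda_t}\Vert\omega_{i-1}\Vert_\infty^2$, one obtains $V(\phi_i(t,j))\le e^{-2\lambda_t t}V(\phi_i(0,0))+\tfrac{\theta^2}{2\lambda_t}\Vert\omega_{i-1}\Vert_\infty^2$.

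The remaining — and, I expect, the only genuinely delicate — step is to convert the flow-time decay $e^{-2\lambda_t t}$ into the hybrid-time decay $e^{-\lambda(t+j)}$ demanded by \eqref{eq:defeISS}. This uses the timer structure of $\mathcal{H}_i$: after any jump $\sigma_{i-1}^+=0$, the smallest element of $T_s\Theta_\Delta$ is $T_s$, and $\dot\sigma_{i-1}=1$ during flow, so consecutive jumps are separated by at least $T_s$ units of flow time; consequently $j\le t/T_s+1$ for all $(t,j)\in\dom\phi_i$, which rearranges to $t\ge\tfrac{T_s}{T_s+1}(t+j)-\tfrac{T_s}{T_s+1}$ and hence $e^{-\lambda_t t}\le e^{\lambda_t T_s/(T_s+1)}\,e^{-\lambda(t+j)}$ with $\lambda:=\lambda_t T_s/(T_s+1)$. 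Inserting this into the previous bound, invoking the sandwich inequality, taking square roots and using $a+b\le 2\max\{a,b\}$ yields \eqref{eq:defeISS} with $\kappa:=\sqrt{2\overline{c}/\underline{c}}\,e^{\lambda_t T_s/(T_s+1)}$, the above $\lambda$, and the class-$\mathcal{K}$ function $p(s):=\theta s/\sqrt{\lambda_t\underline{c}}$, which completes the proof.
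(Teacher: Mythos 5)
Your proof is correct and follows essentially the same route as the paper's: the same composite Lyapunov function $V=V_1+V_2$, the same sandwich bounds built from $\min\{\alpha_1,\beta_1\}$ and $\max\{\alpha_2,\beta_2\}$, nonincrease at jumps from (A3), and integration of the dissipation inequality (A4) to get the $\mathcal{L}_{2t}$ estimate. The only difference is that you spell out the steps the paper delegates to its cited reference (completeness of maximal solutions, and the conversion of the flow-time decay $e^{-\lambda_t t}$ into the hybrid-time decay $e^{-\lambda(t+j)}$ via the minimum dwell time $T_s$ enforced by the timer), which changes the explicit constants $\kappa$, $\lambda$, $p$, $\alpha$ but not the conclusion.
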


\begin{proof}
Inspired by \cite{ferrante2015hybrid}, we select, for every $x_i \in \mathbb{R}^{6}$, $V(x_i) := V_1(\tilde{x}_i)+V_2(\eta_{i-1},\sigma_{i-1})$ as a Lyapunov function candidate for the hybrid system $\mathcal{H}_i^{\omega}$. We prove ($i$) first.
Select $\rho_1 = \min\{\alpha_1,\beta_1\}$, $\rho_2 = \max\{\alpha_2,\beta_2\}$. By considering the definition of the set $\mathcal{A}$ in \eqref{eq:setA}, one obtains
\begin{equation} \label{eq:proofTheoremLyapBounds}
\rho_1 |x_i|_\mathcal{A}^2 \leq V(x_i) \leq \rho_2 |x_i|_\mathcal{A}^2 \qquad \forall x_i \in C \, \cup \,  D
\end{equation}
Moreover, from Assumption \ref{ass:lyapunov} item (A\ref{ass1:Vdot}) one has that $\forall x_i \in C,\omega_{i-1} \in \mathbb{R}$
\begin{equation} \label{eq:proofTheoremLyap}
\langle \nabla V(x_i), f(x_i,\omega_{i-1}) \rangle \leq -2\lambda_t V(x_i) + \theta^2 \omega_{i-1}^2
\end{equation}
and from Assumption \ref{ass:lyapunov} item (A\ref{ass1:jump}), one has that for all $x_i \in D$
\begin{equation} \label{eq:proofJumps}
V(g(x_i))\leq V(x_i)
\end{equation}
Let $(\phi_i,\omega_{i-1})$ be a maximal solution pair to $\mathcal{H}_i^{\omega}$. 
By using \eqref{eq:proofTheoremLyapBounds}, \eqref{eq:proofTheoremLyap}, and \eqref{eq:proofJumps} and following the same steps as in \cite[proof of Theorem 1]{ferrante2015hybrid}, one obtains that for all $(t,j)\in \textrm{dom}\phi_i$ 
\begin{equation}\label{eq:proofTheoremRate}
|\phi_i(t,j)|_\mathcal{A} \! \leq \! \textrm{max} \! \left\lbrace \! 2\sqrt{\frac{\rho_2}{\rho_1}}e^{-\lambda_t t} |\phi_i(0,\!0)|_\mathcal{A} ,\!  \frac{2 \theta}{\sqrt{2 \lambda_t \rho_1}} \Vert\omega_{i-1}\Vert_\infty \!\! \right\rbrace
\end{equation}
which reads as \eqref{eq:defeISS} with $\kappa = 2\sqrt{\frac{\rho_2}{\rho_1}}$, $\lambda = \lambda_t$ and $r \mapsto p(r) := (2\theta/\sqrt{2 \lambda_t \rho_1}) r$. Hence, since every maximal solution pair to $\mathcal{H}_i^{\omega}$ is complete, ($i$) is established.
Now we prove ($ii$). Let $(\phi_i, \omega_{i-1})$ be a maximal solution pair to
$\mathcal{H}_i^{\omega}$ and select $t>0$. Notice that because of Assumption \ref{ass:lyapunov} item (A\ref{ass1:jump}) $V$ is nonincreasing at jumps. Similarly to \cite[proof of Theorem~1]{ ferrante2015hybrid}, one obtains {\small $\int_{\mathcal{I}(t)} \omega_i(r,j(r))^2 dr  \leq  V(\phi_i(0,0)) + \theta^2 \int_{\mathcal{I}(t)} \omega_{i-1}(r,j(r))^2 dr$} where ${\mathcal{I}(t)} := [0,t] \cap \textrm{dom}_t \phi_i$ and $\textrm{dom}_t \phi_i \coloneqq \{t \in \mathbb{R}_{\geq 0} : \exists j \in \mathbb{N}_0 \, \text{s.t.} \, (t, j) \in \textrm{dom} \phi_i\}$.
%\footnote{Given a hybrid signal $w$, then $\textrm{dom}_t w \coloneqq \{t \in \mathbb{R}_{\geq 0} : \exists j \in \mathbb{N}_0 \, \text{s.t.} \, (t, j) \in \textrm{dom} \,w\}$. See \cite{ferrante2015hybrid} for further details.}.
To conclude, one can take the limit for $t$ approaching $\textrm{sup}_t \, \textrm{dom}\phi_i$, and by considering \eqref{eq:proofTheoremLyapBounds}, one obtains \eqref{eq:L2tstability} with $\alpha=\rho_2$. Hence, the result ($ii$) is established.
\end{proof}

\begin{theorem} \label{th:practical}
Let parameters $\tau_d$, $h$, $k_p$, $k_d$, the transmission interval $T_s$, the MANSD $\Delta$, and $\theta\in\R$ be given. If there exist $P_1 \in \mathcal{S}^{4}_+$, $p_2\in \R_{>0}$,  and $\delta \in \R_{>0}$ such that
\begin{equation}
\label{eq:MnegAllt}
\mathcal{M}(\sigma_{i-1}) < 0, \quad \forall \sigma_{i-1} \in [0,(\Delta+1)T_s]
\end{equation}
where the function $[0,(\Delta+1)T_s] \ni \sigma_{i-1} \mapsto \mathcal{M}(\sigma_{i-1})$ is given by
\begin{equation}
\label{LMI} 
\resizebox{\hsize}{!}{$
\begin{aligned}
& \mathcal{M}(\sigma_{i-1})= \\
& \left[ 
\begin{array}{ccc}
\textrm{He}(P_1 A_{xx})+ C_\omega^\top C_\omega &  P_1 A_{x\eta} + C_\omega^\top +   e^{-\delta \sigma_{i-1}} p_2 A_{\eta x}^\top  & P_1 A_{x\omega}\\
\bullet &  - \delta p_2 e^{-\delta \sigma_{i-1}} + 1 &  - e^{-\delta \sigma_{i-1}}  p_2 / h \\
%\bullet & \bullet  & -(1+\epsilon)\\
\bullet & \bullet  & -\theta^2\\
\end{array}
\right]
\end{aligned}
$}
\end{equation}
Then, functions $\tilde{x}_i \mapsto V_1(\tilde{x}_i) \coloneqq \tilde{x}_i^{\top}P_1 \tilde{x}_i$, and $(\eta_{i-1},\sigma_{i-1}) \mapsto V_2(\eta_{i-1},\sigma_{i-1}) \coloneqq p_2 \eta_{i-1}^2 e^{-\delta \sigma_{i-1}}$ satisfy Assumption~\ref{ass:lyapunov}. 
\end{theorem}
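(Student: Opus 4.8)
The plan is to show directly that the proposed $V_1(\tilde x_i)=\tilde x_i^\top P_1\tilde x_i$ and $V_2(\eta_{i-1},\sigma_{i-1})=p_2\eta_{i-1}^2 e^{-\delta\sigma_{i-1}}$ fulfil items (A\ref{ass:lyapunov:A1})--(A\ref{ass1:Vdot}) of Assumption~\ref{ass:lyapunov}, exhibiting admissible constants along the way. Both functions are clearly $C^1$. Item (A\ref{ass:lyapunov:A1}) is immediate from $P_1\in\mathcal{S}^4_+$, with $\alpha_1=\lambda_{\min}(P_1)$ and $\alpha_2=\lambda_{\max}(P_1)$. For item (A\ref{ass:lyapunov:A2}), note that on $C$ one has $\sigma_{i-1}\in[0,(\Delta+1)T_s]$, hence $e^{-\delta(\Delta+1)T_s}\le e^{-\delta\sigma_{i-1}}\le 1$, so the bound holds with $\beta_1=p_2 e^{-\delta(\Delta+1)T_s}$ and $\beta_2=p_2$. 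Item (A\ref{ass1:jump}) is trivial since $V_2\ge 0$ and $V_2(0,0)=0$; the same bounds give the slack $V_2(\eta_{i-1},\sigma_{i-1})-V_2(0,0)\ge p_2 e^{-\delta(\Delta+1)T_s}|\eta_{i-1}|^2$ on the jump set, which moreover yields a strict margin at jumps (the constant $\epsilon$).

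The core of the argument is item (A\ref{ass1:Vdot}). First I would compute $\nabla V$ for $V=V_1+V_2$, namely $\nabla_{\tilde x_i}V=2P_1\tilde x_i$, $\partial_{\eta_{i-1}}V=2p_2 e^{-\delta\sigma_{i-1}}\eta_{i-1}$, $\partial_{\sigma_{i-1}}V=-\delta p_2 e^{-\delta\sigma_{i-1}}\eta_{i-1}^2$, and pair it with the flow map \eqref{eq:flowMapHy} (recalling $\dot\sigma_{i-1}=1$). Expanding $\langle\nabla V(x_i),f(x_i,\omega_{i-1})\rangle$ and adding $\omega_i^2-\theta^2\omega_{i-1}^2$, with $\omega_i^2$ written as in (A\ref{ass1:Vdot}), I expect the sum to collapse into the quadratic form $z^\top\mathcal{M}(\sigma_{i-1})z$ in the stacked vector $z:=(\tilde x_i,\eta_{i-1},\omega_{i-1})$, with $\mathcal{M}$ exactly the matrix of \eqref{LMI}: the $(1,1)$ block carries $\He(P_1 A_{xx})$ from $2\tilde x_i^\top P_1 A_{xx}\tilde x_i$ plus $C_\omega^\top C_\omega$ from $\omega_i^2$; the $(1,2)$ block carries $P_1 A_{x\eta}$, the piece $e^{-\delta\sigma_{i-1}}p_2 A_{\eta x}^\top$ coming from $\partial_{\eta_{i-1}}V\cdot A_{\eta x}\tilde x_i$, and $C_\omega^\top$ from the cross term of $\omega_i^2$; the $(2,2)$ block is $1$ (the $\eta_{i-1}^2$ part of $\omega_i^2$) minus $\delta p_2 e^{-\delta\sigma_{i-1}}$ (the $\sigma$-derivative term); the $(1,3)$ block is $P_1 A_{x\omega}$; the $(2,3)$ block is $-e^{-\delta\sigma_{i-1}}p_2/h$; and the $(3,3)$ block is $-\theta^2$. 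Thus $\langle\nabla V(x_i),f(x_i,\omega_{i-1})\rangle=-\omega_i^2+\theta^2\omega_{i-1}^2+z^\top\mathcal{M}(\sigma_{i-1})z$.

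It then remains to turn the hypothesis \eqref{eq:MnegAllt} into a $-2\lambda_t V$ term. Since $\sigma_{i-1}\mapsto\mathcal{M}(\sigma_{i-1})$ is continuous on the compact interval $[0,(\Delta+1)T_s]$ and negative definite at every point, there is $c>0$ with $\mathcal{M}(\sigma_{i-1})\le-cI$ throughout; hence $z^\top\mathcal{M}(\sigma_{i-1})z\le-c(|\tilde x_i|^2+|\eta_{i-1}|^2)$ after discarding the nonpositive $\omega_{i-1}$-term. Combining this with $V(x_i)\le\max\{\lambda_{\max}(P_1),p_2\}(|\tilde x_i|^2+|\eta_{i-1}|^2)$ and setting $\lambda_t:=c/(2\max\{\lambda_{\max}(P_1),p_2\})>0$ yields $\langle\nabla V(x_i),f(x_i,\omega_{i-1})\rangle\le-2\lambda_t V(x_i)-\omega_i^2+\theta^2\omega_{i-1}^2$ for all $x_i\in C$, $\omega_{i-1}\in\R$, which is (A\ref{ass1:Vdot}).

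I do not anticipate a conceptual obstacle; the one delicate point is the bookkeeping that matches the expansion of $\langle\nabla V,f\rangle+\omega_i^2-\theta^2\omega_{i-1}^2$ to the quadratic form $z^\top\mathcal{M}(\sigma_{i-1})z$ — in particular retaining the $\sigma$-derivative term $-\delta p_2 e^{-\delta\sigma_{i-1}}\eta_{i-1}^2$ of $V_2$, which is precisely what makes the $(2,2)$ entry $1-\delta p_2 e^{-\delta\sigma_{i-1}}$ rather than the indefinite $1$ — together with the (standard) compactness step that upgrades the strict pointwise LMI to a uniform exponential rate $\lambda_t>0$.
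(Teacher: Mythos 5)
Your proposal is correct and follows essentially the same route as the paper: the same constants $\alpha_1=\lambda_{\min}(P_1)$, $\alpha_2=\lambda_{\max}(P_1)$, $\beta_1=p_2e^{-\delta(\Delta+1)T_s}$, $\beta_2=p_2$ for (A\ref{ass:lyapunov:A1})--(A\ref{ass1:jump}), the same identification of $\langle\nabla V,f\rangle+\omega_i^2-\theta^2\omega_{i-1}^2$ with the quadratic form $z^\top\mathcal{M}(\sigma_{i-1})z$, and the same continuity-on-a-compact-interval argument to extract a uniform negative bound and hence $\lambda_t>0$. The only cosmetic difference is that your sign bookkeeping in the final step ($\mathcal{M}\le -cI$, $\lambda_t=c/(2\rho_2)$) is cleaner than the paper's $\overline{\varsigma}$ formulation; the aside about a strict jump margin tied to $\epsilon$ is unnecessary but harmless, since (A\ref{ass1:jump}) only requires a non-strict inequality.
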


\begin{proof}
Consider the functions $\tilde{x}_i \mapsto V_1(\tilde{x}_i) \coloneqq \tilde{x}_i^{\top}P_1 \tilde{x}_i$, and $(\eta_{i-1},\sigma_{i-1}) \mapsto V_2(\eta_{i-1},\sigma_{i-1}) \coloneqq p_2 \eta_{i-1}^2 e^{-\delta \sigma_{i-1}}$. By choosing $\alpha_1=\lambda_{\min}(P_1)$, $\alpha_2=\lambda_{\max}(P_1)$, $\beta_1=p_2 e^{-\delta (\Delta+1)T_s}$, $\beta_2=p_2$, it turns out that items (A\ref{ass:lyapunov:A1}) and (A\ref{ass:lyapunov:A2}) of the Assumption \ref{ass:lyapunov} hold. To show that item (A\ref{ass1:jump}) holds, notice that, by employing the jump map of $\mathcal{H}_i^\omega$, for all $\eta_{i-1} \in \R$ and for all $\sigma_{i-1} \in  T_s \Theta_\Delta$ one has that 
$
V_2(0,0) - V_2(\eta,\sigma_{i-1}) = - p_2 \eta_{i-1}^2 e^{-\delta \sigma_{i-1}} \leq 0
$.
Regarding item (A\ref{ass1:Vdot}) of Assumption~\ref{ass:lyapunov}, let $V(x_i)=V_1(\tilde{x}_i)+V_2(\eta_{i-1},\sigma_{i-1})$. Then, from the definition of the flow map in \eqref{eq:flowMapHy}, for each $x_i\in C$, $\omega_{i-1} \in \R$ one can define $\Omega(x_i,\omega_{i-1})\coloneqq\langle \nabla V(x_i), f(x_i,\omega_{u-1}) \rangle+ \tilde{x}_i^\top C_\omega^\top C_\omega \tilde{x}_i + \eta_{i-1}^2 + 2 C_\omega \tilde{x}_i \eta_{i-1}  - \theta^2\omega_{i-1}^2$. Therefore, by defining $\Psi(x_i,\omega_{i-1}):=(\tilde{x}_i,\eta_{i-1},\omega_{i-1})$, for each $x_i \in C$ and $\omega_{i-1} \in \R$, one has $\Omega(x_i,\omega_{i-1})  \coloneqq \Psi(x_i,\omega_{i-1})^\top \mathcal{M}(\sigma_{i-1}) \Psi(x_i,\omega_{i-1})$ where the symmetric matrix $\mathcal{M}(\sigma_{i-1})$ is given in \eqref{LMI}.  The satisfaction of $\mathcal{M}(\sigma_{i-1})<0, \, \forall \sigma_{i-1} \in [0,(\Delta+1)T_s]$ leads to $\overline{\varsigma}\coloneqq \max_{\sigma_{i-1}\in[0,(\Delta+1)T_s]}\lambda_{\max}(\mathcal{M}(\sigma_{i-1}))<0$. Observe that, since $\sigma_{i-1} \mapsto \mathcal{M}(\sigma_{i-1})$ is continuous on $[0,(\Delta+1)T_s]$, $\overline{\varsigma}$ is well defined. Therefore, one has that for all $x_i\in C, \, \omega_{i-1}\in\R$, $\Omega(x_i,\omega_{i-1})\leq -\overline{\varsigma} \tilde{x}_i^\top \tilde{x}_i=-\overline{\varsigma}\vert x_i\vert^2_\mathcal{A}$.
To conclude, let $\rho_2 \coloneqq \max\{\alpha_2,\beta_2\}$, using \eqref{eq:proofTheoremLyapBounds} and the definition of $\Omega$, one has that for all $x_i\in C,\, \omega_{i-1}\in\R$, $\langle \nabla V(x_i), f(x_i,\omega_{i-1}) \rangle\leq -\frac{\overline{\varsigma}}{\rho_2}V(x_i) - \tilde{x}_i^\top C_\omega^\top C_\omega \tilde{x}_i - \eta_{i-1}^2 - 2 C_\omega \tilde{x}_i \eta_{i-1}+ \theta^2 \omega_{i-1}^2$ which reads as (A\ref{ass1:Vdot}). Hence, item (A\ref{ass1:Vdot}) holds. This concludes the proof.
\end{proof}

The following lemma is employed to reduce the complexity in the use of Theorem~\ref{th:practical}. Indeed, it allows to convert the infinite set of matrix inequalities in 
\eqref{eq:MnegAllt} to only two matrix inequalities in \eqref{eq:Mconditions}. The proof of this result follows the same steps as in  \cite{ferrante2015hybrid} and it is omitted.

\begin{lemma} \label{lemma:Mco}
Let $P_1 \in \mathcal{S}^{4}_+$, $p_2$, $\delta$, $\tau_d$, $h$, and $T_s$ be given positive real number, $\Delta\in\N_0$, and $k_p$, and $k_d$ be given real numbers. For each $\sigma_{i-1} \in [0,(\Delta+1)T_s]$, define $\mathcal{M}: \sigma_{i-1} \mapsto \mathcal{M}(\sigma_{i-1})$. Then, $\text{rge} \mathcal{M} = \text{Co}\{\mathcal{M}(0),\mathcal{M}((\Delta+1)T_s)\}$. Therefore, \eqref{eq:MnegAllt} holds if and only if 
\begin{equation}
\label{eq:Mconditions}
\begin{array}{cc}
\mathcal{M}(0)<0, & \mathcal{M}((\Delta+1)T_s)<0
\end{array}
\end{equation}
\end{lemma}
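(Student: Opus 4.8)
The plan is to exploit the fact that in \eqref{LMI} the symmetric matrix $\mathcal{M}(\sigma_{i-1})$ depends on $\sigma_{i-1}$ \emph{only} through the scalar $e^{-\delta\sigma_{i-1}}$, and does so \emph{affinely}. First I would rewrite $\mathcal{M}(\sigma_{i-1}) = \mathcal{M}_a + e^{-\delta\sigma_{i-1}}\mathcal{M}_b$, where $\mathcal{M}_a$ collects all $\sigma_{i-1}$-independent entries --- the block $\textrm{He}(P_1 A_{xx})+C_\omega^\top C_\omega$, the term $P_1 A_{x\eta}+C_\omega^\top$, the block $P_1 A_{x\omega}$, the scalar $+1$, and $-\theta^2$ --- and $\mathcal{M}_b$ collects the coefficients of $e^{-\delta\sigma_{i-1}}$, namely $p_2 A_{\eta x}^\top$ in the $(1,2)$ block, $-\delta p_2$ in the $(2,2)$ entry, and $-p_2/h$ in the $(2,3)$ entry. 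This is a direct inspection of \eqref{LMI}.

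Next, since $\delta>0$, the map $\sigma_{i-1}\mapsto e^{-\delta\sigma_{i-1}}$ is continuous and strictly decreasing on $[0,(\Delta+1)T_s]$, so its image is exactly the closed interval $[e^{-\delta(\Delta+1)T_s},\,1]$, with the value $1$ attained at $\sigma_{i-1}=0$ and the value $e^{-\delta(\Delta+1)T_s}$ attained at $\sigma_{i-1}=(\Delta+1)T_s$. Because $\sigma_{i-1}\mapsto\mathcal{M}(\sigma_{i-1})$ factors through this interval by the affine map $\mu\mapsto\mathcal{M}_a+\mu\mathcal{M}_b$, and the affine image of a convex interval is the line segment joining the images of its endpoints, we get $\text{rge}\,\mathcal{M} = \text{Co}\{\mathcal{M}_a+\mathcal{M}_b,\ \mathcal{M}_a+e^{-\delta(\Delta+1)T_s}\mathcal{M}_b\} = \text{Co}\{\mathcal{M}(0),\ \mathcal{M}((\Delta+1)T_s)\}$, which is the first assertion.

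Finally, for the equivalence \eqref{eq:MnegAllt}$\Leftrightarrow$\eqref{eq:Mconditions}, I would use the convexity of the set of negative definite symmetric matrices: if $\mathcal{M}(0)<0$ and $\mathcal{M}((\Delta+1)T_s)<0$ then every convex combination $\lambda\mathcal{M}(0)+(1-\lambda)\mathcal{M}((\Delta+1)T_s)$ with $\lambda\in[0,1]$ is negative definite, and by the previous step these combinations are exactly the elements of $\text{rge}\,\mathcal{M}$, so $\mathcal{M}(\sigma_{i-1})<0$ for all $\sigma_{i-1}\in[0,(\Delta+1)T_s]$; the converse is immediate since $0$ and $(\Delta+1)T_s$ both lie in $[0,(\Delta+1)T_s]$. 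I do not expect a genuine obstacle here; the only points requiring care are checking that the dependence on $e^{-\delta\sigma_{i-1}}$ is truly affine (no bare $\sigma_{i-1}$ appears) and getting the endpoint orientation right. This is precisely the polytopic-embedding argument of \cite{ferrante2015hybrid}, which is why the paper omits the details.
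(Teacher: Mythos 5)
Your proof is correct and is precisely the polytopic-embedding argument the paper intends: the paper omits the proof of this lemma, stating only that it ``follows the same steps as in'' the cited reference, and those steps are exactly your decomposition $\mathcal{M}(\sigma_{i-1})=\mathcal{M}_a+e^{-\delta\sigma_{i-1}}\mathcal{M}_b$, the observation that $e^{-\delta\sigma_{i-1}}$ sweeps the interval $[e^{-\delta(\Delta+1)T_s},1]$, and the convexity of the cone of negative definite matrices. No gaps.
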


The satisfaction of Theorem~\ref{th:practical} leads to stability of $\mathcal{H}_i^\omega$ with $\L2t$-gain less than or equal to $\theta$. Notice that the requirement for string stability is $\L2t$-gain less than or equal to one. However, similarly to \cite{dolk2017event}, to make condition \eqref{eq:Mconditions} feasible, we consider an $\L2t$-gain less than or equal to $\theta=\sqrt{1+\epsilon}$ with $\epsilon$ is a small strictly positive value.

\subsection{Controller Tuning Algorithm}
In the following, we show how to employ Proposition~\ref{prop:C1}, Proposition~\ref{prop:C2}, and Theorem~\ref{th:practical} to devise a procedure for the selection of gains $k_p$, and $k_d$ able to solve Problem~\ref{prob:problem}.

Employing Theorem~\ref{th:practical} and Lemma~\ref{lemma:Mco}, one can reformulate Problem~\ref{prob:problem} as the following optimization problem:
\begin{equation} \label{eq:optimProbl}
\underset{P_1,p_2,\delta, k_p, k_d}{\text{maximize}} \quad \Delta, \quad \text{subject to} \, A_e\in\mathbb{P}, \eqref{eq:Mconditions}
\end{equation} 
Notice that the optimization problem \eqref{eq:optimProbl} is nonlinear in the decision variables. For this reason, the solution to \eqref{eq:optimProbl} is difficult from a numerical point of view \cite{boyd1994linear}. In particular, notice that sufficient conditions \eqref{eq:Mconditions} are in the form of matrix inequalities that are nonlinear in $P_1$, $p_2$, $\delta$, $k_p$, and $k_d$. Therefore, they cannot be directly used as a computationally tractable design tool. 
On the other end, when $\delta$, $\Delta$, $k_p$, and $k_d$ are fixed, \eqref{eq:Mconditions} is linear in variables $P_1$ and $p_2$, hence, \eqref{eq:optimProbl} becomes a semidefinite program and can be solved by using available solvers.

Our proposed strategy to obtain a suboptimal solution to \eqref{eq:optimProbl} consists of operating a two-stage line search for the scalars $k_p$, $k_d$, $\delta$, and $\Delta$. The first stage consists of choosing values $(k_p,k_d)$ such that $A_e\in\mathbb{P}$. The second stage considers $(k_p,k_d)$ as given, and targets estimating the largest value of $\Delta$ by checking the feasibility of $\eqref{eq:Mconditions}$ through line searches for $\delta$ and $\Delta$. Observe that while a line searches for $\delta$ and $\Delta$ can be easily implemented with numerical algorithms, exploring values $(k_p,k_d)$ such that $A_e\in\mathbb{P}$ can be computationally expensive if $(k_p,k_d)$ are not suitably selected, e.g., by using gridding techniques on both $k_p$ and $k_d$. In this paper, we choose values $(k_p,k_d)$ by employing Propositions~\ref{prop:C1} and \ref{prop:C2}, which give upper and lower bounds on $k_p$ and yield to obtain $k_d$ as a function of $k_p$; see \eqref{eq:Condition1} and \eqref{eq:Condition2}. This results being one of the main contributions of this paper. In fact, by following this approach, $k_p$ becomes the only parameter for the first stage of the design algorithm, which employs only a bounded line search on $k_p$ with bounds known in advance. Hence, Propositions~\ref{prop:C1} and \ref{prop:C2} dramatically reduce the complexity of the design procedure.

To summarize, the design procedure we propose to solve Problem~\ref{prob:problem} is outlined in Algorithms~\ref{al:mainAlgorithm} and \ref{al:algorithmDelta}. In particular, Algorithm \ref{al:mainAlgorithm} provides the overall design procedure and calls Algorithm~\ref{al:algorithmDelta} for the second stage of the design strategy, i.e., estimating the value of $\Delta$ for given $(k_p,k_d)$.

In the following, we briefly analyze the computational complexity of Algorithms~\ref{al:mainAlgorithm} and \ref{al:algorithmDelta}. To this end, we employ the ``Big O" ($\mathcal{O}$) notation; see, e.g., \cite{cormen2001introduction}. 
Observe that execution time of Algorithm~\ref{al:mainAlgorithm} grows by increasing the size of vectors $\overrightarrow{k}_{p_{C1}}$ and $\overrightarrow{k}_{p_{C2}}$, whereas the execution time of Algorithm~\ref{al:algorithmDelta} grows by increasing the size of the line search on $\delta$. By employing a straightforward analysis of the worst-case iterations required by the design algorithm, one can conclude that Algorithm~\ref{al:algorithmDelta} has complexity $\mathcal{O}(n_\delta)$, whereas Algorithm~\ref{al:mainAlgorithm} has complexity $\mathcal{O}\left(n_\delta (n_{k_1}+n_{k_2})\right)$.

\begin{remark}
It is worth mentioning that the proposed design procedure is meant to run offline. The gains $k_p$ and $k_d$ obtained by the design algorithm are used afterward in the implemented control strategy. Therefore, the time required to obtain the optimal selection of gains $k_p$ and $k_d$ is not critical from a real-time implementation point of view. To this end, notice that the proposed controller is a PD with a ZOH. Its real-time implementation in existing vehicular embedded systems does not differ from a traditional PD controller.
\end{remark}

\begin{algorithm}[phtb]
\caption{Tuning algorithm for performance $\mathbb{P}$, string stability, and the largest achievable $\Delta$}\label{al:mainAlgorithm}
\hspace*{\algorithmicindent} \textbf{Input:} $T_s$, $\tau_d$, $h$
\algblockdefx[NAME]{StartOne}{EndOne}{\textbf{Step 1:} \textit{Explore $(k_p,k_d)$ such that $\Lambda(A_e)$ as in C1.}} {\textbf{End Step 1.}}
\algblockdefx[NAME]{StartTwo}{EndTwo}{\textbf{Step 2:} \textit{Explore $(k_p,k_d)$ such that $\Lambda(A_e)$ as in C2.}} {\textbf{End Step 2.}}
\begin{algorithmic}[1]
\StartOne
\State Define an array $\overrightarrow{k}_{p_{C1}}$ of dimension $n_{k_1}$ with elements in $[\underaccent{\bar}{k}_{p_{C1}}, \bar{k}_{p_{C1}}]$.
\State For each $k_p$ in $\overrightarrow{k}_{p_{C1}}$:
\State \hspace*{\algorithmicindent} Feed \textbf{Algorithm~\ref{al:algorithmDelta}} with $(k_p,f_{C1}(k_p))$.
\State \hspace*{\algorithmicindent} Store returned values of $\Delta$.
\State Identify $k_p$, $k_d$ such that $\Delta$ is the largest.
\EndOne
\StartTwo
\State Define an array $\overrightarrow{k}_{p_{C2}}$ of dimension $n_{k_2}$ with elements in $(\underaccent{\bar}{k}_{p_{C2}}, \bar{k}_{p_{C2}}]$.
\State For each $k_p$ in $\overrightarrow{k}_{p_{C2}}$:
\State \hspace*{\algorithmicindent} Feed \textbf{Algorithm~\ref{al:algorithmDelta}} with $(k_p,f_{C2}(k_p))$.
\State \hspace*{\algorithmicindent} Store returned values of $\Delta$.
\State Identify $k_p$, $k_d$ such that $\Delta$ is the largest.
\EndTwo
\State \textbf{Step 3:} Parameters of controller $\mathcal{K}$ are assigned with values of $k_p$, $k_d$ such that $\Delta$ is the largest among Step~1 and Step~2.
\end{algorithmic}
\end{algorithm}

\begin{algorithm}[phtb]
\caption{Given $(k_p,k_d)$, estimate the largest achievable $\Delta$}\label{al:algorithmDelta}
\hspace*{\algorithmicindent} \textbf{Input:} $T_s$, $\tau_d$, $h$, $k_p$, $k_d$
\begin{algorithmic}[1] 
\State Initialize $\Delta_i$, $\Delta$, $stop$ to zero.
\While{$stop=0$}
	\State line search on $\delta$ based on $n_\delta$ samples for given $\Delta_i$ such that \eqref{eq:Mconditions} is feasible.  
	\If{(line search succeeded)}
		\State $\Delta \gets \Delta_i$, $\Delta_i\gets \Delta_i+1$
	\Else 
	 	\State $stop \gets 1$
	\EndIf
\EndWhile
\State \Return $\Delta$
\end{algorithmic}
\end{algorithm}

\section{Numerical Results}
In this section, we apply Algorithm~\ref{al:mainAlgorithm} to tune the controller $\mathcal{K}$ for a homogeneous platooning of $11$ vehicles. In particular, we select performance $\mathbb{P}$ from \cite{ploeg2011design}, and we show the outcome of tuning the controller parameters $k_p$, $k_d$ by following the approach proposed in this paper. All numerical results are obtained by using Matlab$^{\tiny{\textregistered}}$.
%\footnote{Code at {\color{red} to be uploaded on github}}.
Semidefinite optimizations are performed by using \textit{YALMIP} \cite{lofberg2004yalmip} with solver \textit{SEDUMI} \cite{sturm1999using}. 

Numerical results are obtained by assuming a transmission rate for measurement $u_{i-1}$ equal to $20$ Hz ($T_s=0.05\, s$), as adopted in \cite{gao2016empirical}. Moreover, we select parameters $h = 0.7$, $\tau_d=0.1$, $\lambda_M=-0.367$ from \cite{ploeg2011design}, and $\zeta_m=0.7$. 
Let $\bar{\mathcal{K}}$ be the controller in \eqref{eq:controller} with gains $(k_p,k_d)=(0.2,0.7)$, as in \cite{ploeg2011design}, and consider $\hat{\mathcal{K}}$ the same controller tuned with our approach. The design of $\hat{\mathcal{K}}$ through Algorithm~\ref{al:mainAlgorithm} results in a final tuning characterized by $(k_p,k_d)=(0.82,2.6)$.  Observe that both $\bar{\mathcal{K}}$ and $\hat{\mathcal{K}}$ lead to vehicle platoons that satisfy performance $\mathbb{P}$ with $\lambda_M=-0.367$ and $\zeta_m=0.7$.

From a computational point of view, notice that the design employs $n_{k_1}=162$, $n_{k_2}=13$, and $\delta=241$ and requires a computation time of $1$ hour, $58$ minutes and $53$ seconds on a $2.70$ GHz Intel Core $i7$ RAM $32$ GB. 

\begin{figure}[htbp]
\centering 
\psfrag{x}[][][0.9]{$k_p$}
\psfrag{y}[b][][0.9]{$k_d$}
\includegraphics[clip, scale=0.45]{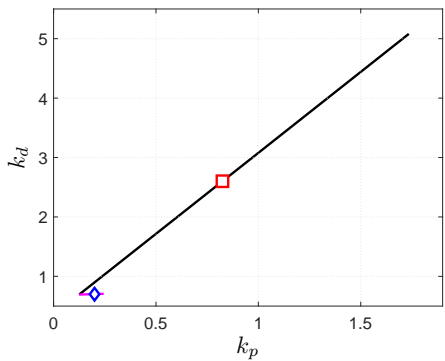} \\
\caption{Locus of $(k_p,k_d)$ such that Conditions C1 (black) and C2 (magenta) for $\Lambda(A_e)$ are satisfied. The blue diamond and the red square respectively identify $\Lambda(A_e)$ and $(k_p,k_d)$ for $\bar{\mathcal{K}}$ and $\hat{\mathcal{K}}$.}
\label{fig:setKopt}
\end{figure}

\begin{figure}[htpb]
\centering
\psfrag{x}[][][0.9]{$k_p$}
\psfrag{y}[b][][0.9]{$\Delta$}
\includegraphics[scale=0.45]{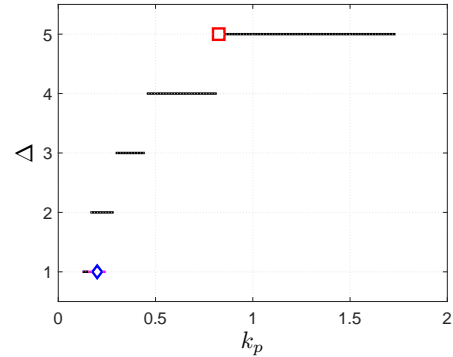}
\caption{Locus of $(k_p,\Delta)$ such that Conditions C1 (black) and C2 (magenta) for $\Lambda(A_e)$ are satisfied. The blue diamond and the red square respectively identify the final tuning $(k_p,\Delta)$ for $\bar{\mathcal{K}}$ and $\hat{\mathcal{K}}$.}
\label{fig:setDeltaopt}
\end{figure}

To better understand the approach, consider Fig. \ref{fig:setKopt} and Fig. \ref{fig:setDeltaopt}, which respectively represent the locus $(k_p,k_d)$ such that $A_e \in \mathbb{P}$, and the locus $(k_p,\Delta)$ such that $A_e \in \mathbb{P}$ and string stability are satisfied. Notice that the tuning of $\hat{\mathcal{K}}$ aims at selecting the minimum value of $k_d$ such that $\Delta$ is maximum. This choice allows reducing, at minimum, the effect of the derivative action on the controlled vehicles. By analyzing the tuning of controllers $\bar{\mathcal{K}}$ and $\hat{\mathcal{K}}$ in Fig.~\ref{fig:setDeltaopt}, one can conclude that $\hat{\mathcal{K}}$, tuned with Algorithm~\ref{al:mainAlgorithm}, guarantees performance $\mathbb{P}$ and string stability with higher resiliency to DOS attacks with respect to  $\bar{\mathcal{K}}$. This emerges from the fact that the value of $\Delta$ obtained for $\hat{\mathcal{K}}$ is equal to $5$, whereas $\Delta$ is equal to $1$ for $\bar{\mathcal{K}}$.

To validate our approach, we simulate a platoon of $11$ vehicles where the leader performs an acceleration of $2\,m/s^2$, and a deceleration of $4\,m/s^2$. This acceleration profile is similar to that one used in \cite{ploeg2014controller}. Figure~\ref{fig:speedTest_noDrops} and Fig.~\ref{fig:speedTest_Drops} depict velocity and distance profiles for vehicle platoons controlled by $\bar{\mathcal{K}}$ and $\hat{\mathcal{K}}$ respectively in case of ``attack-free" IVC and IVC affected by DOS attacks.
To show how the controlled platoons behave under DOS attacks, we consider the worst DOS attack case scenario: We induce DOS attacks with intervals characterized by $5$ consecutive packet dropouts and only $1$ packet successfully delivered in between DOS intervals. The simulations start with the first of the five packet dropouts, i.e., the first measurements exchanged by the vehicles is at $6T_s=0.3\,s$ after starting the simulations.
As a result, notice that in case of ``attack-free" IVC, the two controllers provide the same behavior. In case of occurring DOS attacks, instead, the behavior of the vehicle platooning controlled by $\bar{\mathcal{K}}$ is degraded compared to the same controller with ``attack-free" network and vehicle platoons controller by $\hat{\mathcal{K}}$ under DOS attacks. This is noticeable by the increase in overshoot for increasing vehicle index.

\begin{figure}[thpb]
\vspace{5pt}
\centering
\psfrag{x1}[][][0.9]{$t \, [s]$}
\psfrag{a}[][][0.9]{(a)}
\psfrag{y1}[][][0.9]{$v_i \, [m/s]$}
\psfrag{x2}[][][0.9]{$t \, [s]$}
\psfrag{b}[][][0.9]{(b)}
\psfrag{y2}[][][0.9]{$d_i \, [m]$}
\psfrag{x3}[][][0.9]{$t \, [s]$}
\psfrag{c}[][][0.9]{(c)}
\psfrag{y3}[][][0.9]{$v_i \, [m/s]$}
\psfrag{x4}[][][0.9]{$t \, [s]$}
\psfrag{d}[][][0.9]{(d)}
\psfrag{y4}[][][0.9]{$d_i \, [m]$}
\includegraphics[scale=0.6]{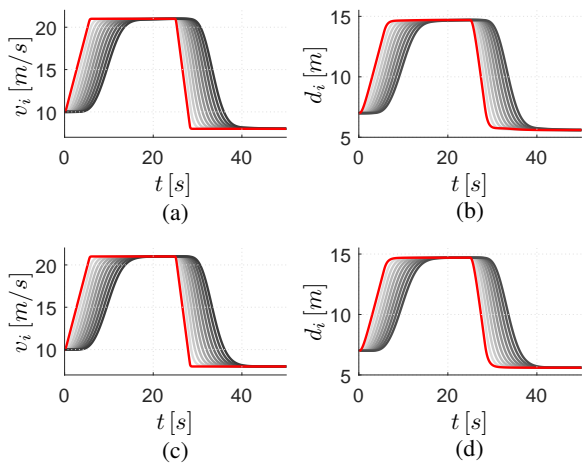}
\caption{Velocity and distance profiles for vehicle platoons controlled by $\bar{\mathcal{K}}$, in (a)-(b), and $\hat{\mathcal{K}}$, in (c)-(d), in case of ``attack-free" IVC. In red the speed of $\mathcal{V}_0$, in (a)-(c), and the relative distance between $\mathcal{V}_0$ and $\mathcal{V}_1$, in (b)-(d). From light grey to black are depicted speeds (relative distances) of vehicles with indexes from $1$ to $10$ ($2$ to $10$).} 
\label{fig:speedTest_noDrops}
\end{figure}

To conclude our numerical analysis, we gathered, in Table~\ref{tab:tunings}, the outcomes of Algorithm~\ref{al:mainAlgorithm} for different values of $h$, the constant time gap between vehicles. It emerges that the resiliency to DOS attacks increases with $h$.

\begin{table}[thpb]
\caption{Values of $\Delta$ and tuned parameters obtained for $\hat{\mathcal{K}}$  for different values of $h$ by using Algorithm~\ref{al:mainAlgorithm}.} 
\centering
\resizebox{\columnwidth}{!}{%
\begin{tabular}{|c||c|c|c|c|c|c|c|c|}
\hline
$h$ [s] & $0.4$ & $0.5$ & $0.6$ & $0.7$ & $0.8$ & $0.9$ & $1$ & $1.1$  \\ \hline
$\Delta$ & $1$  & $2$ & $4$ & $5$ & $6$ & $7$ & $8$ & $9$   \\ \hline
$k_p$ & $0.5$ & $0.5$ & $1.05$  & $0.82$ & $0.69$ & $0.59$ & $0.52$ & $0.46$  \\ \hline
$k_d$ & $1.73$ & $1.73$ & $3.23$ & $2.6$ & $2.25$ & $1.97$ & $1.78$ & $1.62$\\ \hline
\end{tabular}
}
\label{tab:tunings}
\end{table}

\section{Conclusion}
This paper proposed a hybrid controller for string stable homogeneous vehicle platoons. In particular, the proposed controller and tuning algorithm provides a tool to design a DOS-resilient CACC that also satisfies performance requirements. In addition, the tuning algorithm returns a metric to evaluate the resiliency to DOS attacks. Indeed, our approach allows estimating the maximum number of consecutive packet dropouts occurring during the DOS attacks that the proposed CACC can tolerate without losing string stability of the vehicle platooning.

The effectiveness of our approach has been shown in some numerical examples.
Our approach turns out having a higher resilience to DOS attacks compared to \cite{ploeg2011design}. Furthermore, since string stability is guaranteed with inter-vehicle time gaps smaller than one second, our approach is also more efficient compared to ACC or control approaches that rely only on onboard sensors \cite{ploeg2014graceful,wu2019cooperative,harfouch2017adaptive}.

Future research directions aim at extending the proposed approach to account for: heterogeneous platoons, measurement noise, and control input saturation.

\begin{figure}[bhp]
\vspace{5pt}
\centering
\psfrag{x1}[][][0.9]{$t \, [s]$}
\psfrag{a}[][][0.9]{(a)}
\psfrag{y1}[][][0.9]{$v_i \, [m/s]$}
\psfrag{x2}[][][0.9]{$t \, [s]$}
\psfrag{b}[][][0.9]{(b)}
\psfrag{y2}[][][0.9]{$d_i \, [m]$}
\psfrag{x3}[][][0.9]{$t \, [s]$}
\psfrag{c}[][][0.9]{(c)}
\psfrag{y3}[][][0.9]{$v_i \, [m/s]$}
\psfrag{x4}[][][0.9]{$t \, [s]$}
\psfrag{d}[][][0.9]{(d)}
\psfrag{y4}[][][0.9]{$d_i \, [m]$}
\includegraphics[scale=0.6]{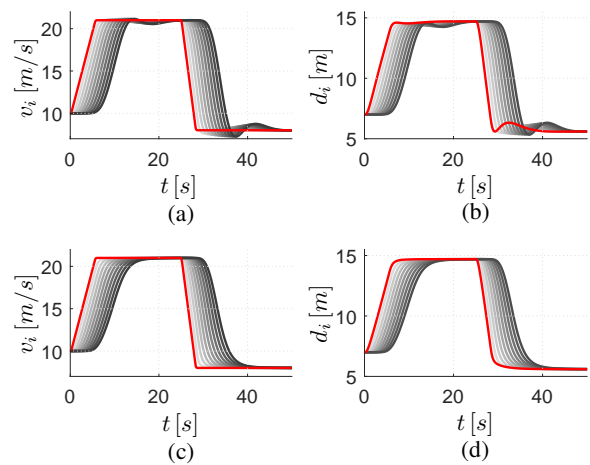}
\caption{Velocity and distance profiles for vehicle platoons controlled by $\bar{\mathcal{K}}$, in (a)-(b), and $\hat{\mathcal{K}}$, in (c)-(d), in case of IVC under DOS attacks. In red the speed of $\mathcal{V}_0$, in (a)-(c), and the relative distance between $\mathcal{V}_0$ and $\mathcal{V}_1$, in (b)-(d). From light grey to black are depicted speeds (relative distances) of vehicles with indexes from $1$ to $10$ ($2$ to $10$).} 
\label{fig:speedTest_Drops}
\end{figure}

% if have a single appendix:
%\appendix[Proof of the Zonklar Equations]
% or
%\appendix  % for no appendix heading
% do not use \section anymore after \appendix, only \section*
% is possibly needed

% use appendices with more than one appendix
% then use \section to start each appendix
% you must declare a \section before using any
% \subsection or using \label (\appendices by itself
% starts a section numbered zero.)
%

%\appendices
%\section{Proof of the First Zonklar Equation}
%Appendix one text goes here.
%
%% you can choose not to have a title for an appendix
%% if you want by leaving the argument blank
%\section{}
%Appendix two text goes here.

% use section* for acknowledgment
%\section*{Acknowledgment}
%This material is based upon work supported by the National Science Foundation (NSF) under grant No. CNS-1544910. Any opinions, findings and conclusions or recommendations expressed in this material are those of the authors and do not necessarily reflect the views of the National Science Foundation.

% Can use something like this to put references on a page
% by themselves when using endfloat and the captionsoff option.
%\ifCLASSOPTIONcaptionsoff
%  \newpage
%\fi

% trigger a \newpage just before the given reference
% number - used to balance the columns on the last page
% adjust value as needed - may need to be readjusted if
% the document is modified later
%\IEEEtriggeratref{8}
% The "triggered" command can be changed if desired:
%\IEEEtriggercmd{\enlargethispage{-5in}}

\balance
\bibliographystyle{IEEEtran}
\bibliography{IEEEabrv,references}

%@IEEEtranBSTCTL{IEEEexample:BSTcontrol,
%CTLdash_repeated_names = "no"
%}

\balance
%\begin{IEEEbiography}{Roberto Merco}
\begin{IEEEbiography}[{\includegraphics[width=1in,height=1.25in,clip,keepaspectratio]{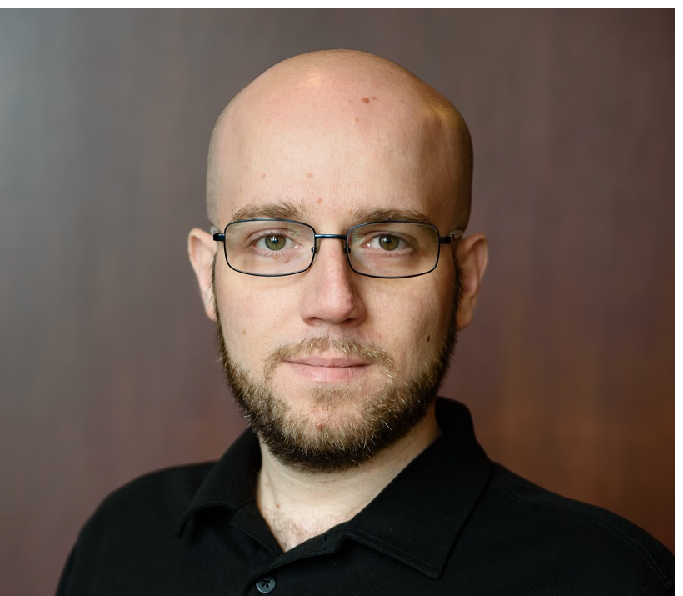}}]{Roberto Merco}
received the Ph.D. degree in Automotive Engineering from Clemson University in 2019, the B.Sc. degree cum laude in Control Engineering from University Tor Vergata, Rome, Italy in 2007, and the M.Sc. degree cum laude in Control Engineering from University Tor Vergata, Rome, Italy in 2010. From 2010 to 2016 he was working in a system integrator company in the field of industrial automation. His research interests include the area of controls of resilient networked control systems, hybrid dynamical systems, and intelligent transport systems.
\end{IEEEbiography}

% if you will not have a photo at all:
%\begin{IEEEbiography}{Francesco Ferrante}
% ``"
\begin{IEEEbiography}[{\includegraphics[width=1in,height=1.25in,clip,keepaspectratio]{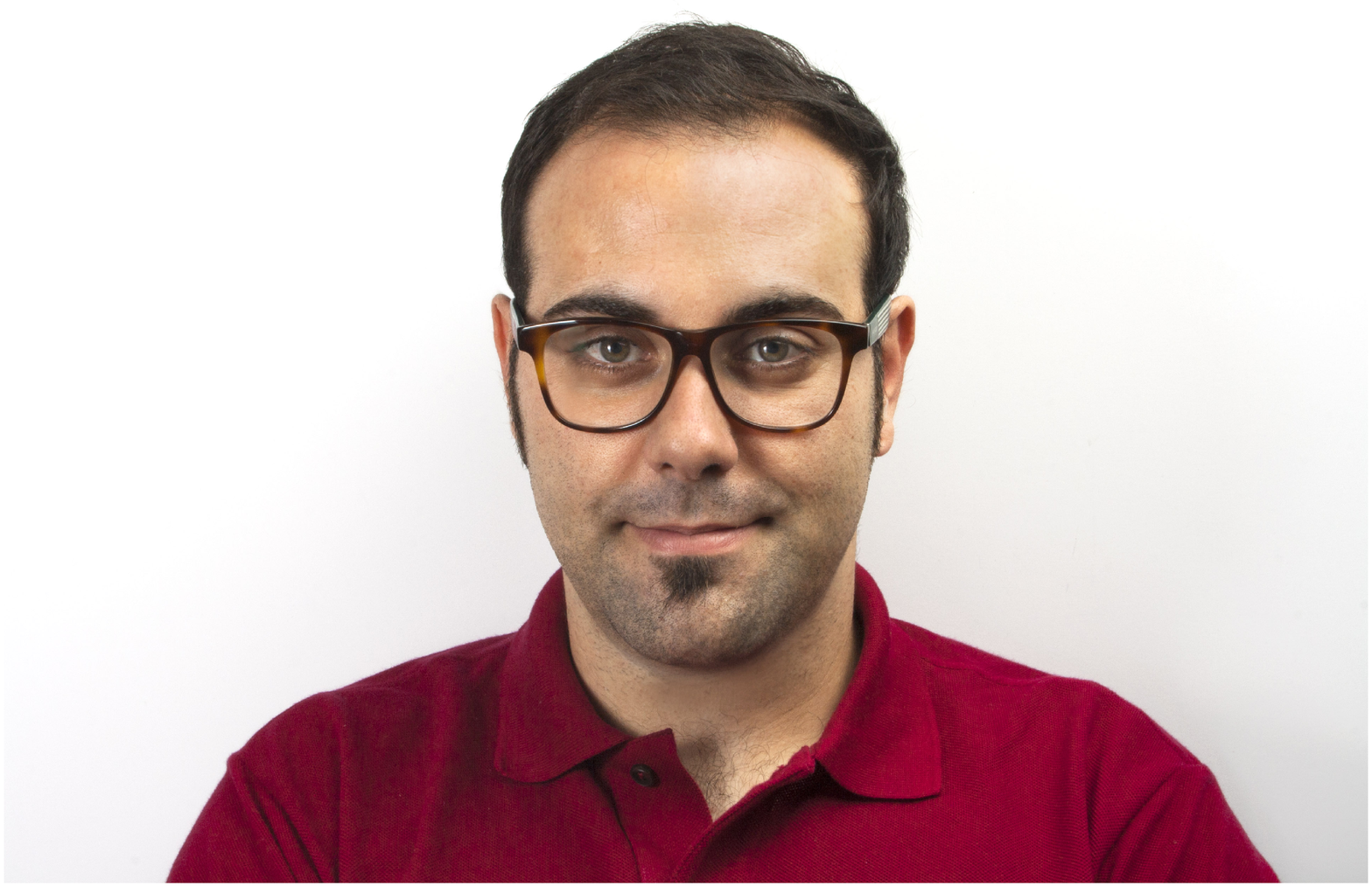}}]{Francesco Ferrante}
is an assistant professor at the Faculty of Sciences of the University of Grenoble Alpes, France. He also holds an adjunct assistant professor position at the Department of Automotive Engineering of Clemson University, USA.
He received in 2010 a ``Laurea degree" (BSc) in Control Engineering from University ``Sapienza" in Rome, Italy and in 2012 a ``Laurea Magistrale" degree (MSc) with honors in Control Engineering from University ``Tor Vergata" in Rome, Italy. During 2014, he held a visiting scholar position at the Department of Computer Engineering, University of California Santa Cruz. In 2015, he received a PhD degree in control theory from ``Institut supérieur de l'aéronautique et de l'espace" (SUPAERO) Toulouse, France. From November 2015 to August 2016, he was a postdoctoral fellow at the Department of Electrical and Computer Engineering, Clemson University. From August 2015 to September 2016, he held a position as postdoctoral scientist at the Hybrid Systems Laboratory (HSL) at the University of California at Santa Cruz. He currently serves as an associate editor in the conference editorial boards of the IEEE Control Systems Society and the European Control Association. 
\end{IEEEbiography}

% insert where needed to balance the two columns on the last page with
% biographies
%\newpage

%\begin{IEEEbiography}{Pierluigi Pisu}
\begin{IEEEbiography}[{\includegraphics[width=1in,height=1.25in,clip,keepaspectratio]{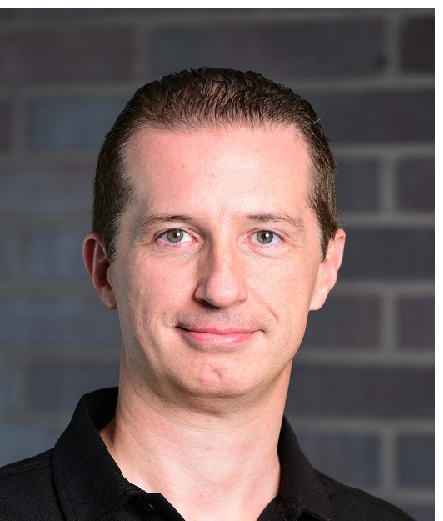}}]{Pierluigi Pisu}
received the Ph.D. degree in electrical engineering from The Ohio State University in 2002 and the Laurea in computer engineering from the University of Genoa, Italy. He is an Associate Professor of automotive engineering with Clemson University, with a joint appointment in the Holcombe Department of Electrical and Computer Engineering. He is the Leader of the Deep Orange 10 Program. He is the Director of the DOE GATE Hybrid Electric Powertrain Laboratory and the Creative Car Laboratory. His research interests include the area of functional safety, security, control and optimization of cyber-physical systems for next generation of high performance and resilient connected and automated systems with emphasis in both theoretical formulation and virtual/hardware-in-the-loop validation.
\end{IEEEbiography}

\end{document}